\documentclass[a4paper,11pt]{article}

\usepackage[english]{babel}
\usepackage[utf8]{inputenc}
\usepackage[T1]{fontenc}

\usepackage{graphicx} 
\usepackage[margin=1in]{geometry}
\usepackage{algorithm}
\usepackage[noend]{algpseudocode}
\usepackage{amsthm}
\usepackage{amssymb,amsmath}  
\usepackage{thmtools}  
\usepackage{mathtools}
\usepackage{bm}
\usepackage{xcolor}
\definecolor{VUBlue}{rgb}{0,0.466,0.702}
\usepackage[colorlinks=true, allcolors=VUBlue]{hyperref}
\usepackage[capitalize]{cleveref}
\usepackage{placeins}
\usepackage{subcaption}
\usepackage{booktabs}
\usepackage{microtype}

\usepackage{newtxtext}
\usepackage{newtxmath}

\usepackage{enumitem}
\setlist[1]{labelindent=\parindent}
\setlist[enumerate]{label=(\arabic*)}
\setlist[itemize]{noitemsep}
\setlist[description]{noitemsep}
\usepackage{paralist}

\usepackage{tikz}
\usepackage{pgfplots}

\newif\ifhidetodos
\hidetodostrue

\usepackage{marginnote}

\newcommand{\cbox}[2][yellow]{%
  \fcolorbox{#1}{white}{\parbox{\dimexpr\linewidth-2\fboxsep}{\strut #2\strut}}%
}

\newcommand{\customtodo}[3]{\textcolor{#2}{\(\bla\meanLoadktriangledown\)}\marginnote{\raggedright \textcolor{#2}{\textbf{#1:} #3}}}
\newcommand{\customtododisplay}[3]{\noindent\cbox[#2]{\textcolor{#2}{\textbf{#1:} #3}}}
\newcommand{\customtodoinline}[3]{\textcolor{#2}{\textcolor{#2}{\textbf{#1:} #3}}}

\ifhidetodos
    \renewcommand{\customtodo}[3]{}
    \renewcommand{\customtododisplay}[3]{}
    \renewcommand{\customtodoinline}[3]{}
\fi

\usepackage{authblk}
\usepackage{orcidlink}

\author[1,2]{Ben Bals \orcidlink{0009-0009-1054-8444}}
\author[2]{Matei Tinca \orcidlink{0009-0009-8109-1867}}
\author[2,1]{Yasamin Nazari \orcidlink{0000-0003-1315-9355}}
\affil[1]{CWI, Amsterdam, The Netherlands}
\affil[2]{Vrije Universiteit, Amsterdam, The Netherlands}
\date{\today}

\declaretheorem[numberwithin=section]{theorem}

\declaretheorem[sibling=theorem]{lemma}
\declaretheorem[sibling=theorem]{definition}

\declaretheorem[sibling=theorem]{observation}
\declaretheorem[sibling=theorem]{proposition}

\usepackage{mathtools}
\usepackage{xspace}
\usepackage{tikz}
\usetikzlibrary{decorations.pathreplacing}

\newcommand{\cO}{\mathcal{O}}
\newcommand{\cOt}{\widetilde{\mathcal{O}}}

\newcommand{\pred}{\mathrm{pred}}
\newcommand{\setsize}[1]{\left|#1\right|}

\newcommand{\ie}[1]{(i.e.,~#1)}

\title{Cut Query Reachability for DAGs with Subquadratic Queries}

\begin{document}

\maketitle

\begin{abstract}
  In the \emph{cut-query model}, we have access to a (directed) graph via an oracle and we can  query the size of the (directed) cut of a given subset of the vertices.
  One of the most elementary tasks in this model is to decide if there is a path two fixed vertices $s$ and $t$.

  While many results are known for undirected graphs, much less in understood for directed graphs in the cut query model.
  Even for the basic task of $s$-$t$ reachability, the best known \emph{randomized} algorithm, is to reconstruct the entire graph with a technique by Grebinski and Kucherov using $\cO(n^2 / \log n)$ queries \cite{grebinskiOptimalReconstructionGraphs2000a}.

  We restrict our attention to \emph{directed acyclic graphs (DAGs)} and obtain a \emph{deterministic} single-source reachability algorithm using $\cO(n \sqrt{n \log n})$ queries.
  The result is based on a topological sort algorithm, and can also be adapted to compute single-source shortest paths in DAGs.
\end{abstract}

\thispagestyle{empty}
\clearpage
\pagenumbering{arabic}
\setcounter{page}{1}

\section{Introduction}
In the \emph{cut query model} we want to compute some information on an input graph $G=(V,E)$ that we can only access via an oracle.
In particular, given a subset of the vertices $S$, the oracle will return the value of the cut \ie{the edges from $S$ to $V \setminus S$}.
The first natural question is how to recover the entire graph, but it can be show through an information theoretic argument that this requires nearly quadratically many queries.
Thus, naturally, our goal is to see which specific questions about the graph can be answered using fewer than quadratic queries.
Examples include: Can node $s$ reach node $t$? What is the minimum $s$-$t$ cut? What is the global min-cut? 

This model was original studied by Grebinski and Kucherov \cite{grebinskiOptimalReconstructionGraphs2000a} who showed how to fully recover an undirected graph using queries of the form: Given $S \subseteq V$, how many edges lie inside $S \times S$?
They show how to achieve full graph recovery using $\cO(n^2 / \log n)$ queries, which is optimal under an information theoretic lower bound.
Their technique can be adapted relatively straightforwardly to the cut query model where it gives the same results.

Another motivating perspective is from submodular function optimization: the cut function is submodular and so the cut query complexity of a problem is asking how many times this submodular function has to be evaluated to determine the solution to a problem.
This is particularly natural when minimizing the cut it self \ie{when we look for a minimum $s$-$t$ cut or a global minimum cut}.

The cut query model itself was introduced by Rubinstein, Schramm and Weinberg \cite{rubinstein_et_al:LIPIcs.ITCS.2018.39}. For undirected graphs, they then present algorithms for \textit{exact minimum global cut} and \textit{exact minimum $s$-$t$ cut} in complexities\footnote{The $\cOt$ notation supresses polylogarithmic factors.} $\cOt(n)$ and $\cOt(n^{5/3})$ respectively, avoiding the bottleneck of rebuilding the whole graph in almost quadratically many queries.
Very recently, the complexity for \textit{minimum $s$-$t$ cuts} was further improved by Jiang, Nanongkai and Sawettamalya \cite{DBLP:conf/soda/JiangNS26} to $\cOt(n^{8/5})$.
Other problems (like spanning forests \cite{DBLP:conf/alt/0001C24}), or other models (like minimizing the number of rounds, each of arbitrarily many queries, \cite{DBLP:conf/esa/Kenneth-Mordoch25}) have received attention too.

The problem is less understood for directed graphs. Here, while the oracle allows us to discover edges in a cut, we do not immediately also know about their direction. Specifically, as stated in \cite{graurNewQueryLower2020}, flipping the directions of any cycle would not change the query profile of a graph, hence the directions of edges inside strongly connected components cannot be deduced. Moreover, even the problem in which we have to decide whether the minimum cut has nonzero value is non-trivial: we know no algorithm better than $\cOt(n^2)$, nor a lower bound stronger than $\Omega(n)$. 

For directed graphs, there is a trivial way to solve reachability by reconstructing all the edges without their direction and by querying the out degrees of every vertex, using $\binom{n}{2} + n$ queries. Then, any subsequent cut can be computed with the available information without invoking the oracle.
This is based on the observation that we can compute the value of a cut $S$, by adding up all the degrees of nodes in $S$ and subtracting the number of internal edges of $S$, and these can be computed via standard subroutines in the cut query model.

Considering the importance of this problem, it is natural to ask if $s$-$t$ reachability (or more generally single-source) can be solved in subquaratic time on DAGs as a first step to solving the general problem.
In particular, this question can be rephrased as asking whether the minimum $s$-$t$ cut is at least one.

\subsection{Our Contributions}
We study the cut query model in directed graphs.
Given is a directed graph $G = (V, E)$ which cannot be accessed directly, but through an oracle that returns the size of the directed cut of a set of vertices given as a query.
The goal is to perform tasks on the graph using as few cut queries as possible. Since the graph can be reconstructed\footnotemark{} trivially in $\mathcal{O}(n^2)$ queries, we explore whether single-source (and thus $s-t$) reachability can be computed using less queries. \footnotetext{The graph can be reconstructed only up to the directions of directed cycles. That is, we can learn all the edges of the graph, but we cannot learn the directions of edges that lie on a directed cycle. See \cite[Claim 28]{graurNewQueryLower2020}.}

For \textit{directed acyclic graphs}, we answer this question positively. We present an algorithm that decides \textit{$s$-$t$ reachability} using $\mathcal{O}\left( n \sqrt{n \log n}  \right)$ queries. At its core, the algorithm employs a procedure that computes a topological sort of a directed acyclic graph, also using $\mathcal{O}(n \sqrt{n \log n})$ queries. Formally, we prove the following theorem.

\begin{restatable*}{theorem}{algtopsort} \label{thm:algtopsort}
    Let $G = (V, E)$ be a simple directed acyclic graph. There is an algorithm that computes a topological sort of $G$ using $\cO(n \sqrt{n \log n})$ cut queries.
\end{restatable*}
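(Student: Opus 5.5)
The plan is a query-efficient version of Kahn's algorithm, built on one structural observation. For any $S \subseteq V$, let $\delta^+(S)$ be the directed cut (what the oracle returns), $\delta(S)$ the undirected cut, and $\deg^\pm(x)$ the in- and out-degrees of a vertex $x$. Then
\[
\delta^+(S) = \tfrac12\Bigl(\setsize{\delta(S)} + \sum_{x\in S}\bigl(\deg^+(x)-\deg^-(x)\bigr)\Bigr).
\]
All $2n$ degrees cost $2n$ queries: $\deg^+(v)=\delta^+(\{v\})$ and $\deg^-(v)=\delta^+(V\setminus\{v\})$. After that, every directed cut query is equivalent to an undirected cut query and vice versa. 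For disjoint $X,Y$ we therefore get the number $e(X,Y)$ of undirected edges between them from $\setsize{\delta(X)}+\setsize{\delta(Y)}-\setsize{\delta(X\cup Y)}$, using $\cO(1)$ queries.

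We build the order as a growing prefix $A$, always appending a source of $G[V\setminus A]$. Every edge between $A$ and a remaining vertex $v$ points into $v$. Hence $v$ is a source of the remaining graph iff $c(v):=e(A,\{v\})=\deg^-(v)$. Define the deficit $d(v)=\deg^-(v)-c(v)$; the algorithm needs to detect when it reaches $0$.

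Fix a parameter $k$ and work in phases.
\begin{enumerate}
\item At the start of a phase, recompute $c(v)$ for every remaining $v$, using $\cO(1)$ queries each given $\setsize{\delta(A)}$. This costs $\cO(n)$ queries.
\item Mark as \emph{candidates} all $v$ with $d(v)\le k$. A vertex not marked cannot become a source within the next $k$ removals, since the graph is simple and each removal lowers $d(v)$ by at most one.
\item Remove $k$ sources one at a time. When $u$ is appended to $A$, let $C$ be the current set of candidates. Query $e(\{u\},C)$. If it is positive, locate all neighbours of $u$ in $C$ by binary splitting of $C$, which costs $\cO(\log n)$ queries per neighbour found. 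Decrement the deficits of those neighbours, and add any whose deficit reaches $0$ to the pool of sources.
\end{enumerate}
Any neighbour of $u$ in $C$ is a remaining vertex, so the edge is directed $u\to v$ and is counted exactly once, when $u$ is removed. Since sources exist at every step of a DAG and every vertex that can become a source within the phase is tracked, the phase always has a source available.

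The accounting is the main point. There are $\cO(n/k)$ phases, so recounts cost $\cO(n^2/k)$ in total. The per-step test $e(\{u\},C)$ costs $\cO(n)$ in total. For the binary searches, each discovered edge lowers the deficit of a candidate. A vertex first becomes a candidate with deficit at most $k$, and its deficit only decreases afterwards. So over the whole run each vertex contributes at most $k$ discovered edges, and binary searches cost $\cO(nk\log n)$ in total. Altogether this is $\cO(n^2/k + nk\log n)$ queries. Choosing $k=\sqrt{n/\log n}$ gives $\cO(n\sqrt{n\log n})$.

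I expect the main obstacle to be justifying that one never needs edge directions among unprocessed vertices. These are genuinely not recoverable locally from cut values, because of the identity above. The argument must rely only on the fact that any edge joining $A$ to $V\setminus A$ points out of $A$, together with the degree-equality source test. I would also check that the binary search charges each discovered edge to a deficit decrement of a vertex that was a candidate with deficit at most $k$. Vertices dropped from $C$ once removed no longer matter.
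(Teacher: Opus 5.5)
Your proposal is correct and follows essentially the paper's argument. Your candidates are its light vertices: their remaining indegree is at most the threshold, and their incoming edges are found by binary search only as in-neighbours are appended, so at most $k$ per vertex and $\cO(nk\log n)$ queries overall. Your unmarked vertices are its heavy ones, which by the same ``a removal lowers the remaining indegree by at most one'' observation need rechecking only once per $k$ removals, giving $\cO(n^2/k)$. The differences are cosmetic. You recheck all remaining vertices in synchronized phases rather than delaying each heavy vertex individually by its own deficit. You also get edge counts from the directed/undirected cut identity instead of via $\mathrm{int}$ and $\mathrm{cross}$; note that $\setsize{\delta(X)}+\setsize{\delta(Y)}-\setsize{\delta(X\cup Y)}$ equals $2e(X,Y)$, so it must be halved.
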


From this core result, we deduce the following. 

\begin{restatable*}{theorem}{correachability}\label{thm:reachability}
    Let $G = (V, E)$ be a directed acyclic graph and $s \in V$. There is an algorithm that computes all vertices that are reachable from $s$ using $\mathcal{O}(n \sqrt{n \log n})$ cut queries.
\end{restatable*}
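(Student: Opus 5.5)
We derive the statement from \cref{thm:algtopsort}: once a topological order is known, reachability from $s$ can be decided with only $\cO(n)$ additional cut queries. The total is therefore dominated by the $\cO(n \sqrt{n \log n})$ queries of the topological sort.

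\textbf{Step 1 (order).} Run the algorithm of \cref{thm:algtopsort} to obtain a topological order $v_1, \dots, v_n$ of $G$, so every edge $(v_i, v_j)$ has $i<j$. Let $s = v_k$. No vertex $v_i$ with $i<k$ is reachable from $s$, since every path only moves forward in the order. We mark all such vertices unreachable without any queries.

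\textbf{Step 2 (edges between disjoint sets).} Write $d^+(X)$ for the oracle answer, i.e.\ the number of edges leaving $X$. For disjoint $A,B \subseteq V$, the edges leaving $A$ count the edges from $A$ to $B$ and the edges from $A$ to $V\setminus(A\cup B)$, and symmetrically for $B$. The edges leaving $A\cup B$ are exactly the edges from $A$ or $B$ into $V\setminus(A\cup B)$. Hence
\[
  d^+(A) + d^+(B) - d^+(A \cup B) = e(A,B) + e(B,A),
\]
where $e(A,B)$ counts the edges from $A$ to $B$. Now take $B = \{v_j\}$ and $A \subseteq \{v_1,\dots,v_{j-1}\}$. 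Then $e(B,A)=0$ by the topological order, so three queries return exactly the number of edges from $A$ into $v_j$.

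\textbf{Step 3 (sweep).} Set $R = \{s\}$. For $j = k+1, \dots, n$ in increasing order:
\begin{itemize}
  \item compute $e(R, \{v_j\})$ with the three queries of Step 2;
  \item add $v_j$ to $R$ if and only if this value is positive.
\end{itemize}
At iteration $j$ we have $R \subseteq \{v_k,\dots,v_{j-1}\}$, so Step 2 applies.

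\textbf{Correctness.} We prove by induction on $j$ that after iteration $j$, $R$ equals the set of vertices among $v_k,\dots,v_j$ that are reachable from $s$. A vertex $v_j \neq s$ is reachable if and only if some in-neighbour of $v_j$ is reachable. All in-neighbours of $v_j$ precede it in the order, and those preceding $s$ are unreachable by Step 1. So this condition is equivalent to $e(R,\{v_j\})>0$, using the induction hypothesis for $R$.

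\textbf{Cost.} The sweep uses at most $3n$ queries, so the total is $\cO(n\sqrt{n\log n}) + \cO(n)$. All the difficulty sits in \cref{thm:algtopsort}. The only point that needs care in this reduction is the cancellation of $e(v_j, R)$, which is exactly what the topological order provides.
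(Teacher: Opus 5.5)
Your proposal is correct and is essentially the paper's proof: compute a topological order via \cref{thm:algtopsort}, then sweep through it, adding $v_j$ to $R$ exactly when the direction-blind edge count between $R$ and $v_j$ is positive, which the order makes equal to the number of edges from $R$ into $v_j$; this costs only $\cO(n)$ extra queries. Your identity $d^+(A)+d^+(B)-d^+(A\cup B)$ is the paper's $\mathrm{cross}(A,B)$ written directly as cut queries, so it even avoids the degree preprocessing, and starting the sweep at $s$ is equivalent to the paper's starting from $R=\emptyset$.
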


In fact, the result is somewhat stronger as we are able to compute the entire out-reachability of $s$ in the same number of queries. 
We also show this algorithm can be extended to compute distances.

\begin{restatable*}{theorem}{algsssp} \label{thm:sssp}
    Let $G = (V, E)$ be a directed acyclic graph and $s\in V$. There is an algorithm that computes all the shortest paths from $s$ using $\mathcal{O}(n \sqrt{n \log n})$ cut queries.
\end{restatable*}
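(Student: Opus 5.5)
The plan is to reduce the problem to the topological sort of \cref{thm:algtopsort}, and then fill in distances with only $\cO(\log n)$ additional queries per vertex. The key fact is that once a topological order $v_1, \dots, v_n$ is known, every edge between $v_i$ and $v_j$ with $i<j$ must be directed $v_i \to v_j$. So for any set $A \subseteq \{v_1,\dots,v_{j-1}\}$ and $v = v_j$ there are no edges from $v$ into $A$. Writing $\delta^+(X)$ for the directed cut value returned by the oracle, we then have
\[
  e(A \to v) \;=\; \delta^+(A) + \delta^+(\{v\}) - \delta^+(A \cup \{v\}),
\]
since the only edges counted by $\delta^+(A)+\delta^+(\{v\})$ but not by $\delta^+(A\cup\{v\})$ are those between $A$ and $v$, and all of them go from $A$ to $v$. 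Hence with three cut queries we can decide whether $v$ has an in-neighbour in any prescribed set of vertices that precede it in the order.

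First I run the algorithm of \cref{thm:algtopsort}, using $\cO(n\sqrt{n\log n})$ queries. Vertices before $s$ in the order are unreachable, so I set $d(s)=0$ and give them distance $\infty$. Then I process the remaining vertices $v_j$ after $s$ in topological order. When $v_j$ is processed, every in-neighbour of $v_j$ already has its final distance, so $d(v_j) = 1 + \min\{d(u) : u \to v_j\}$. Let $A_t = \{v_i : i<j,\ d(v_i) \le t\}$ for $t \in \{0,\dots,n-1\}$. The predicate ``$v_j$ has an in-neighbour in $A_t$'' is monotone in $t$, so I binary search for the smallest $t$ at which it holds. Each test costs $\cO(1)$ queries by the identity above. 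If the predicate fails even for $t=n-1$, then $v_j$ is unreachable and $d(v_j)=\infty$; otherwise $d(v_j)=t+1$. To output actual shortest paths rather than just distances, I additionally locate a parent: I binary search inside the set $\{u \in A_t : d(u)=t\}$ by repeatedly halving it and testing which half contains an in-neighbour of $v_j$. This gives a vertex $u$ with $u\to v_j$ and $d(u)=t$, again with $\cO(\log n)$ queries, and the parent pointers form a shortest path tree from $s$.

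Correctness follows by induction along the topological order: $d(v_j)$ is determined by the distances of its in-neighbours, all of which precede $v_j$ and are therefore already correct. The extra cost is $\cO(n\log n)$ queries, which is dominated by the $\cO(n\sqrt{n\log n})$ queries of the topological sort. All of the real difficulty is thus in \cref{thm:algtopsort}. The only subtle point here is making sure that the edge-counting identity is applied only with sets $A$ that lie entirely before $v$ in the order, because the oracle cannot distinguish edge directions otherwise. The monotone binary search keeps every such query of this form.
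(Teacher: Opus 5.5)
Your proposal is correct and matches the paper's proof. Both compute the topological sort via \cref{thm:algtopsort} and then, processing vertices in topological order, spend $\cO(\log n)$ queries per vertex on a binary search for the preceding in-neighbour of minimum distance. The paper's \cref{lemma:discover-minimal-edge} sorts the predecessors by $D$ and halves along prefixes, which amounts to your threshold search on $t$ followed by a search inside the level set. The only difference is that you detect unreachable vertices inside this search rather than first filtering them with \cref{thm:reachability}, which is equally valid.
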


Finally, we show the following theorem, which gives some insight on how to attack the $s$-$t$ reachability problem on general directed graphs.

\begin{restatable*}{theorem}{algtopsortscc} \label{thm:algtopsortscc}
    Let $G = (V, E)$ be a directed graph and $\mathcal{C} = (C_1, C_2, \ldots, C_k)$ be a partition of the vertices in strongly connected components. There is an algorithm that computes the topological sort of the strongly connected components of $G$ using $\cO(n \sqrt{n \log n})$ cut queries.
\end{restatable*}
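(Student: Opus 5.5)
We reduce to the DAG topological sort of \cref{thm:algtopsort} by running it on the condensation of $G$. The partition $\mathcal{C}$ is given, so contracting each $C_i$ to a single vertex yields a DAG $H$ on $k \le n$ vertices. The key observation is that a cut query on $H$ can be simulated by one cut query on $G$. For $S \subseteq \{C_1,\dots,C_k\}$, query the set $\bigcup_{C_i \in S} C_i$ in $G$. The directed cut of this union counts exactly the edges of $G$ leaving the union, and edges inside a component never cross it, so the answer is the number of $G$-edges from components in $S$ to components outside $S$.

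The one subtlety is that $H$, taken as the quotient with multiplicities, is a directed acyclic \emph{multigraph}: several edges of $G$ may join $C_i$ to $C_j$. \Cref{thm:algtopsort} is stated for simple DAGs. So the main step is to check that the topological sort algorithm only uses the oracle in a way that tolerates parallel edges, with edge weights equal to the multiplicity $w(C_i,C_j)=|E\cap (C_i\times C_j)|$. I expect that algorithm to use only a few primitives: testing whether a set has positive in-degree or out-degree relative to another set, and finding sources, e.g.\ a vertex $v$ of the remaining graph $R$ with no incoming edge from $R$, detected via $\mathrm{cut}(R\setminus\{v\})$ and related quantities. Such primitives only need to distinguish ``zero'' from ``nonzero'' or to compare sums, and they remain valid when edges carry positive integer weights. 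The reason is that acyclicity and source-ness depend only on the support of $w$.

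Any step that instead uses a bound like ``at most $\binom{|S|}{2}$ edges inside $S$'' would break, and rereading the proof of \cref{thm:algtopsort} to rule this out is the main obstacle. If such a step exists, the fallback is to make it support-only: replace counting by binary search or group testing on whether a weighted cut is nonzero. Group testing identifies the nonzero pairs with the same number of queries up to constant factors, since each test only asks whether a weighted sum of nonnegative terms is positive.

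Given this, the simulated algorithm uses $\cO(k\sqrt{k\log k}) \subseteq \cO(n\sqrt{n\log n})$ queries on $G$. Its output is an order $C_{\pi(1)},\dots,C_{\pi(k)}$ such that every edge of $H$ goes forward. That is exactly a topological sort of the strongly connected components of $G$, because every $G$-edge between distinct components maps to an $H$-edge, and edges inside a component do not affect the condensation.
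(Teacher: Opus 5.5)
Your simulation of cut queries on the condensation $H$ by cut queries on $G$ is fine. The gap is in the step you defer as the ``main obstacle'': the reduction breaks there, because your expectation about \cref{alg:topsort} is wrong. That algorithm does not only distinguish zero from nonzero. In its heavy case, justified by \cref{obs:indegreehops}, it reads the \emph{magnitude} of the remaining indegree $d$ and stops checking the vertex until $d$ further vertices have been placed. This is sound only because in a simple graph $d$ incoming edges come from $d$ distinct in-neighbours. In $H$, $d$ counts parallel edges and can far exceed the number of in-neighbouring components.

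For instance, let $a$ have an edge to every vertex of a directed cycle $C$ of length $m>B$, so $H$ consists of two nodes joined by $m$ parallel edges. If the for-loop examines $C$ before $a$ has been appended, $C$ is postponed until $m\ge 2$ further components are placed. Only one other component exists, so this never happens. Because $C$ is heavy, it is not reached through the light-neighbour cascade in \textsc{Append} either, so the while-loop never terminates. Hence \cref{alg:topsort} run verbatim on $H$ is not correct, and no bound of the form $\cO(k\sqrt{k\log k})$ follows from \cref{thm:algtopsort}.

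Your fallback does not repair this. Group testing identifies the nonzero pairs only at a cost that grows with their number (about $\log n$ queries each, as in \cref{lemma:discover-edge}). It does not do so ``with the same number of queries up to constant factors'' as a single counting query. Making the delay support-only would therefore mean essentially identifying the in-neighbouring components of a heavy SCC at every check. That is exactly what the heavy case exists to avoid: the $\cO(n^2/B)$ term in the analysis charges only $\cO(1)$ queries per check.

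The missing idea is to use the simplicity of $G$ rather than of $H$. The $d$ edges entering $C$ from outside come from at least $\lceil d/|C|\rceil$ distinct vertices of $G$, all lying in components that must precede $C$ (\cref{obs:indegreehops-scc}). So the algorithm cannot be used as a black box; it must be modified, as in the paper's \cref{alg:topsort-scc}:
\begin{itemize}
\item postpone $C$ until $\lceil d/|C|\rceil$ more \emph{vertices of $G$} (not components) have been placed;
\item call $C$ light when $d/|C|\le B$;
\item redo the accounting.
\end{itemize}
A light SCC then has at most $B|C|$ edges discovered into it, $Bn$ in total, and each heavy SCC is checked at most $n/B+1$ times. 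This gives $\cO(n + nB\log n + nk/B)=\cO(n\sqrt{n\log n})$ for $B=\sqrt{n/\log n}$.
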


This implies that, once the strongly connected components are known, we can compute $s$-$t$-reachability in polynomially subquadratic queries.
In some sense, this suggests that finding the strongly connected components is the main roadblock to solving the general $s$-$t$ reachabality problem.
This raises interesting intermediate open questions such as answering if a general directed graph is strongly connected in polynomially less than $n^2$ queries. 

\paragraph{Note on independent work.} 
We were recently informed that two other groups concurrently and independently obtained improved bounds for reachability on DAGs in the cut query model. In particular, Sanjeev Khanna, Aaron Putterman, and Junkai Song obtained an algorithm with near-linear query complexity \cite{khanna2026reachability}. Deeparnab Chakrabarty and Andrew Zhao obtained the same bounds as ours.

\subsection{Organization and Overview}

\Cref{section:preliminaries} presents the notation used throughout this document. In \Cref{section:subroutines}, we derive important subroutines. The identities used to compute helper functions (such as the degree) using cut queries are given succinctly in the proof of \Cref{lemma:primitives}. Then, \Cref{lemma:discover-edge} describes a procedure used to discover an edge between two disjoint sets of vertices $A$ and $B$ using $\mathcal{O}(\log n)$ queries. \Cref{lemma:discover-edge} can be further used to reconstruct the graph ignoring the directions of each edge, similarly to \cite[Section 2]{rubinstein_et_al:LIPIcs.ITCS.2018.39}. 

\Cref{section:topsort} adapts an algorithm that computes a topological sort from the classical model to the cut query model. Then, the algorithm is modified to use a subquadratic number of queries by classifying vertices as \textit{heavy} or \textit{light} and processing them accordingly.
We construct the topological sort iteratively and always maintain a prefix of a topological sort to which we append.
The core insight is that a vertex that has many incoming edges from vertices not yet placed in our partial topological sort can be effectively ignored for a while.
In particular, we prove \Cref{thm:algtopsort}.
\Cref{section:applications} shows how the single-source reachability problem can be solved after computing the topological sort. The algorithm passes through each vertex once and decides for each vertex whether it is reachable or not. Interestingly, the pass itself uses only a linear number of queries, therefore the bottleneck for reachability is computing the topological sort.
We also show how a simple extension of our algorithm can lead to a single-source shortest path algorithm.

\section{Preliminaries} \label{section:preliminaries}

Let $G = (V, E)$ be a simple directed graph. We define the number of vertices and the number of edges to be $n := |V|$ and $m := |E|$ respectively. The set of edges $E$ is a set of ordered pairs $E \subseteq V \times V$. An edge defined by pair $(u, v) \in E$ will be denoted as $u \to v$. 

We define the undirected graph $\tilde{G} = (V, \tilde{E})$ as the graph obtained from $G$ by making every edge undirected. That is, $\tilde{E}$ is instead a set of unordered pairs over $V \times V$. Then, an edge of $\tilde{E}$ is denoted by $uv$ and $vu$.

A (directed) path from $u$ to $v$ is a sequence $u = s_1, s_2, \ldots, s_k = v$ of distinct vertices such that for all $i$ with $1 \le i \le k - 1$, there is an edge $s_i \to s_{i + 1}$. We denote that there exists a path from $u$ to $v$ as $u \rightsquigarrow v$, in which case we say that $v$ is reachable from $u$. Similarly, a (directed) cycle is a sequence $s_1, s_2, \ldots, s_k$ with $k > 2$ such that $s_1 = s_k$, all $s_1, \ldots, s_{k - 1}$ are distinct, and for all $i$ with $1 \le i \le k - 1$, we have $s_i \to s_{i + 1}$.

A \textit{directed acyclic graph} (DAG for short) is a directed graph which contains no cycles. For DAGs, informally, we can order the vertices in a sequence such that all edges go from left to right. This topological ordering is defined formally as the following:

\begin{definition} \label{def:topsort}
    Let $G = (V, E)$ be a directed acyclic graph. A  permutation $S = [s_1, s_2, \ldots, s_n]$ of vertices is a \textit{topological ordering} (also called \textit{topological sort}) if for every edge $s_i \to s_j$, we have $i < j$.
\end{definition}

For a sequence of elements $S = [s_1, s_2, \ldots, s_n]$, it is sometimes helpful to argue on one of its prefixes. In this case we use $S_k := [s_1, s_2, \ldots, s_k]$ to denote the prefix of length $k$ of $S$.

A strongly connected component (SCC for short) is a maximal set of vertices $C \subseteq V$ such that, for every pair of vertices $u, v \in C$, we have $u \rightsquigarrow v$. For a vertex $v$, we denote $SCC(v)$ to be the strongly connected components which contains $v$. The vertices of a graph $G$ can be then partitioned into $k$ strongly connected components as $\mathcal{C} = \{C_1, C_2, \ldots, C_k\}$ with all $C_i$ disjoint sets and $\bigcup_{C \in \mathcal{C}} C = V$. Note that such a partition is unique.

Contracting the graph by its strongly connected components results in a DAG. As such, we can topologically order the SCCs of a graph.

\begin{definition}
    Let $G = (V, E)$ be a directed graph. Let $\mathcal{C} = \{C_1, C_2, \ldots, C_k\}$ be a partition of $V$ into strongly connected components. A permutation $\mathcal{S} = [S_1, S_2, \ldots, S_k]$ of $\mathcal{C}$ is a topological ordering of SCCs if for every edge $u \to v$ with $u \in S_i$, $v \in S_j$, and $i \neq j$, we have that $i < j$.
\end{definition}

\paragraph*{Cut Query Oracle}

\newcommand{\query}{\mathrm{cut}}
\newcommand{\querycomp}{\query_I}
\newcommand{\indegree}{\mathrm{deg}^{-}}
\newcommand{\outdegree}{\mathrm{deg}^{+}}
\newcommand{\internal}{\mathrm{int}}
\newcommand{\cross}{\mathrm{cross}}

We study algorithms that do not have access to the structure of the graph, but instead they can access an oracle that returns the size of a cut. Concretely, an algorithm has access to $V$ and to the function $\query : \mathcal{P}(V) \to \mathbb{N}$ defined as

\[
    \query(S) := |\{u \to v : u \in S, v \not\in S\}|.
\] 

\noindent The objective of one such algorithm is to compute some information using as few oracle queries as possible. It may happen that an algorithm takes arbitrary running time.


\section{Useful Subroutines} \label{section:subroutines}

If a simple query counts the number of edges going in one direction \ie{outgoing edges}, we can compute the number of incoming edges into a set of vertices by applying a query on the complement:

\begin{align*}
    \querycomp(S) &:= |\{u \to v : u \not\in S, v \in S\}| \\
                  &~= \query(V \setminus S). 
\end{align*}

Two simple subroutines that will prove to be useful in the future are the degrees of each vertex:

\begin{align*}
    \outdegree(v) &:= \query(\{v\}), \\
    \indegree(v) &:= \querycomp(\{v\}).
\end{align*}

There is a simple way to decide whether there is an edge between two vertices $u, v$. If there is no edge in between them, we expect that the size of the cut $\{u, v\}$ is exactly the sum of their outdegrees. On the other hand, if there are edges in between the two, the sum of outdegrees would overcount said extra edges, thus the sum of outdegrees would be larger by one or two than the size of the cut.

To generalize this observation for larger subsets of vertices, we define the number of edges with both endpoints in set $S$ as

\[
    \internal(S) := |\{u \to v : u \in S, v \in S\}|.
\]

\noindent Then, by analyzing the difference between the sum of outdegrees and a cut query, we obtain:

\begin{align*}
    \left( \sum_{u \in S} \outdegree(u) \right) - \query(S) &= |\{u \to v : u \in S, v \in V\}| - |\{u \to v : u \in S, v \in V \setminus S\}| \\
                                                            &= |\{u \to v : u \in S, v \in S\}| \\
                                                            &= \internal(S).
\end{align*}

\noindent Hence, we get the following identity which, after preprocessing the indegrees and outdegrees using $\mathcal{O}(n)$ queries, allows for the computation of $\internal(S)$ using $\mathcal{O}(1)$ additional queries:

\[
    \internal(S) = \left( \sum_{u \in S} \outdegree(u) \right) - \query(S).
\] 

Finally, we define the subroutine $\cross$ which will be the most useful throughout. It simply denotes the number of edges between two sets of vertices:

\begin{align*}
    \cross(A, B) &:= |\{u \to v : u \in A, v \in B\}| + |\{u \to v : u \in B, v \in A\}| \\
                 &~= |\{u \to v : u \in A, v \in A \cup B\}| - |\{u \to v : u \in A, v \in A\}| + \\ 
                 &\qquad + |\{u \to v : u \in B, v \in A \cup B\}| - |\{u \to v : u \in B, v \in B\}| \\
                 &~= |\{u \to v : u \in A \cup B, v \in A \cup B\}| - |\{u \to v : u \in A, v \in A\}| - \\ 
                 &\qquad - |\{u \to v : u \in B, v \in B\}| \\
                 &~= \internal(A \cup B) - \internal(A) - \internal(B).
\end{align*}

\noindent We therefore get the identity

\begin{gather*}
    \cross(A, B) = \internal(A \cup B) - \internal(A) - \internal(B)
\end{gather*}

\noindent which means again that after computing all indegrees and outdegrees, $\cross$ can be computed with $\mathcal{O}(1)$ additional cut queries.

While $\cross$ does not take edge directions into account, it can be used to find weak connectivity properties of the graph. Moreover, the edge direction limitation can be lifted if an algorithm chooses queries carefully. This lifting will be thoroughly studied in the context of topological sorting in \Cref{section:topsort}.

It might be useful to count the number of edges between a vertex $v$ and another set of vertices $S$. Therefore, as notation abuse, we may drop the braces around the singleton set consisting of $v$. That is, we have $\cross(v, S) := \cross(\{v\}, S)$.

We encapsulate the above identities into the following lemma.

\begin{lemma} \label{lemma:primitives}
    Let $G = (V, E)$ be a directed graph. After preprocessing $\indegree$ and $\outdegree$ using $\mathcal{O}(n)$ cut queries, any calls to functions $\query$, $\querycomp$, $\internal$ and $\cross$ use $\mathcal{O}(1)$ additional cut queries.
\end{lemma}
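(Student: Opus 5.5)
The plan is to make precise what the preprocessing stores and then verify each of the four primitives against the identities derived above. For the preprocessing, we query $\outdegree(v) = \query(\{v\})$ for every $v \in V$, which costs $n$ queries. We also query $\indegree(v) = \querycomp(\{v\}) = \query(V \setminus \{v\})$, which costs another $n$ queries. In total this is $2n = \mathcal{O}(n)$ queries, and all degrees are then stored. Strictly speaking only the outdegrees are needed for the identities below, but storing both costs nothing extra asymptotically.

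Next we check each primitive in turn.
\begin{itemize}
\item A call to $\query(S)$ is a single oracle query by definition.
\item A call to $\querycomp(S)$ is answered by the single query $\query(V \setminus S)$; this relies on the identity $\querycomp(S) = \query(V\setminus S)$ shown at the start of this section.
\item For $\internal(S)$ we use the identity $\internal(S) = \sum_{u \in S}\outdegree(u) - \query(S)$. The sum is computed from the stored values without any oracle call, so one query suffices.
\item For $\cross(A,B)$ we use the identity $\cross(A,B) = \internal(A\cup B) - \internal(A) - \internal(B)$. Each of the three terms costs one query by the previous item, so three queries suffice.
\end{itemize}

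The only point needing care is the derivation of the $\cross$ identity, specifically its middle step. There we split the edges leaving $A$ into those landing in $A\cup B$, and we add the corresponding count for $B$. This step requires $A$ and $B$ to be disjoint, which is how $\cross$ is used in the paper (for example for discovering an edge between two disjoint sets). I would state this assumption explicitly.

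If the sets overlap, I would fall back on the definition of $\cross$ directly rather than on the identity. In that case edges with both endpoints in $A\cap B$ are not treated consistently, but the disjoint case is all the lemma is meant to cover. Apart from this, the proof is bookkeeping: each primitive is expressed in terms of at most three oracle calls plus precomputed degrees, so every call uses $\mathcal{O}(1)$ additional queries.
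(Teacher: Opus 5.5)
Your proposal is correct and is essentially the paper's proof. You store the degrees using $2n$ queries, then answer $\mathrm{cut}$ and $\mathrm{cut}_I$ with one query each, $\mathrm{int}(S)$ with one query via $\sum_{u\in S}\mathrm{deg}^{+}(u)-\mathrm{cut}(S)$, and $\mathrm{cross}(A,B)$ with three queries via $\mathrm{int}(A\cup B)-\mathrm{int}(A)-\mathrm{int}(B)$.

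Your remark that this last identity needs $A\cap B=\emptyset$ is a correct refinement that the paper leaves implicit. Note that the identity also miscounts edges between $A\cap B$ and $(A\setminus B)\cup(B\setminus A)$, not only edges inside $A\cap B$. If you want to cover overlapping sets, falling back on the definition is not itself a query procedure; instead use $\mathrm{cross}(A,B)=\mathrm{cross}(A\setminus B,B\setminus A)+\mathrm{cross}(A\setminus B,A\cap B)+\mathrm{cross}(A\cap B,B\setminus A)+2\,\mathrm{int}(A\cap B)$, which still costs $\mathcal{O}(1)$ queries.
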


\begin{proof}
    
    \begin{gather}
        \querycomp(S) = \query(V \setminus S) \\
        \outdegree(v) = \query(\{v\}) \label{eq:indegree} \\
        \indegree(v) = \querycomp(\{v\}) \label{eq:outdegree} \\
        \internal(S) = \left( \sum_{u \in S} \outdegree(u) \right) - cut(S) \label{eq:internal} \\
        \cross(A, B) = \internal(A \cup B) - \internal(A) - \internal(B) \label{eq:cross}
    \end{gather}

    After precomputing $\indegree$ and $\outdegree$ for each vertex, every other formula uses $\mathcal{O}(1)$ additional cut queries. Note that $\querycomp$ does not actually use the indegrees and outdegrees of other vertices.
\end{proof}

\paragraph{Discovering edges} One natural question to ask is how to reconstruct the graph when it is sparse, as then edges would potentially be harder to find. We propose an extension to \cite[Lemma 2.1]{rubinstein_et_al:LIPIcs.ITCS.2018.39} and \cite[Proposition 2.10]{DBLP:conf/soda/JiangNS26} to return, given vertex sets $A$ and $B$, an edge with one endpoint in $A$ and one endpoint in $B$. 

Since the function $\cross$ does not take the direction of edges into account, the lemma cannot determine the direction of the discovered edge. While the procedure described in the proof works for $G$, for simplicity, we state the lemma in terms of $\tilde{G}$.

\newcommand{\Discover}{\mathrm{Discover}}

\begin{lemma}[Discovering an edge: $\Discover(A, B, D)$] \label{lemma:discover-edge}
    Let $A, B \subseteq V$ be two disjoint sets and $D \subseteq \tilde{E}$ such that $|D \cap (A \times B)| < \cross(A, B)$. There is a procedure that discovers an edge $uv \in (\tilde{E} \setminus D) \cap (A \times B)$ using $\mathcal{O}(\log n)$ queries.
\end{lemma}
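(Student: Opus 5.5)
The plan is a binary search that repeatedly halves one of the two sets, using $\cross$ (via \Cref{lemma:primitives}) to count edges across a pair of sets. The known edges in $D$ are subtracted off so that we always keep a part that still contains an undiscovered edge. Since $D$ is known explicitly, for any $A' \subseteq A$ and $B' \subseteq B$ we can compute $d(A',B') := |D \cap (A' \times B')|$ without any queries. We then define the surplus
\[
  \sigma(A', B') := \cross(A', B') - d(A', B'),
\]
which by \Cref{lemma:primitives} costs $\mathcal{O}(1)$ cut queries once the degrees are preprocessed. The $\mathcal{O}(n)$ preprocessing is assumed to be done once globally. If it must be charged to this call, one instead computes $\internal$ of the sets involved directly: each value $\internal(S)$ needs $\sum_{u\in S}\outdegree(u)$, and I would note that the degrees are shared across calls, so only $\mathcal{O}(1)$ fresh queries per step are needed. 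The hypothesis says exactly that $\sigma(A,B) > 0$, and we maintain the invariant $\sigma(A',B') > 0$ throughout.

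The procedure has two phases. In the first phase, while $|A'| > 1$, split $A' = A_1 \cup A_2$ into halves. Because $\cross(\cdot, B')$ and $d(\cdot, B')$ are both additive over disjoint unions of the first argument, we have $\sigma(A', B') = \sigma(A_1, B') + \sigma(A_2, B')$. Hence one of the two terms is positive, and we recurse on that half. Note that $\sigma \ge 0$ always holds, since $D \subseteq \tilde E$ and each edge is counted at most once. In the second phase, once $A' = \{u\}$, we halve $B'$ in the same way until $B' = \{v\}$. Then $\sigma(\{u\},\{v\}) > 0$, which means $\cross(u,v) = 1$ while $uv \notin D$, so $uv \in (\tilde E \setminus D) \cap (A \times B)$.

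Each halving step costs $\mathcal{O}(1)$ queries, since only $\sigma(A_1,B')$ is needed and $\sigma(A_2,B')$ follows by subtraction from the known value. There are at most $\lceil\log|A|\rceil + \lceil\log|B|\rceil = \mathcal{O}(\log n)$ steps. The main point to verify is the additivity of $\cross$ over disjoint pieces of one argument when $A$ and $B$ are disjoint. This follows directly from its definition as a count of edges between the two sets, with no double counting because the graph is simple and $A' \cap B' = \emptyset$. Edges inside $A'$ or inside $B'$ never contribute, so the subtraction of $D$ aligns exactly.
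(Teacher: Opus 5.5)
Your proposal is correct and is essentially the paper's proof. Your $\sigma$ is exactly the paper's $\cross_D$, and the paper likewise halves one side at a time, keeps a half with positive count by additivity, and notes the same saving of one evaluation per step. (Your hedge about charging the degree preprocessing is muddled but also unnecessary: for disjoint sets, $\cross(A,B)=\query(A)+\query(B)-\query(A\cup B)$ costs three cut queries with no preprocessing at all.)
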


The set $D$ can be thought of as the set of edges that we have previously discovered. Then, \Cref{lemma:discover-edge} discovers a \textit{new} edge between $A$ and $B$.

The procedure described in the proof below consists of two binary searches, one for each end of an edge. At each point, we maintain as an invariant that there exists an edge between $A$ and $B$, then we shrink one of $A$ or $B$ in half until they both contain an element. Taking said element from both sets gives the sought after edge.

\begin{proof}
    Let $\cross_D(A, B) = |((A \times B) \cap \tilde{E}) \setminus D|$. Notice that $\cross_D(A, B)$ is additive, and that is, $\cross_D(A' \cup A'', B' \cup B'') = \cross_D(A', B) + \cross_D(A'', B) = \cross_D(A, B') + \cross_D(A, B'')$ where $A = A' \cup A''$, $B = B' \cup B''$ and $A', A'', B', B''$ are disjoint.

    We maintain as invariant $\cross_D(A, B) > 0$. At each step, we choose either $A$ or $B$, split it in roughly equal parts and replace the chosen set with the half that maintains the invariant. Without loss of generality, we choose to split $A$ into disjoint $A', A''$ such that $|A'|$ and $|A''|$ differ by at most $1$. Since the sum $\cross_D(A', B) + \cross_D(A'', B) = \cross_D(A, B) > 0$, then at least one of the terms must be positive. Hence, by replacing $A$ with $X \in \{A', A''\}$ such that $\cross_D(X, B) > 0$, the invariant is maintained.
    At the end of the procedure, let $A = \{u\}$ and $B = \{v\}$. Then, the procedure returns the edge $uv$.

    Since at each step, we have to compute $\cross_D(A', B)$ and $\cross_D(A'', B)$, and at each step, the size of either $A$ or $B$ gets halved, then we must do $2\log |A| + 2\log |B| \leq 4\log n = \cO(\log n)$ queries. 
    While not necessary, we can omit computing $\cross_D(A'', B)$, thus needing $2 \log n = \cO(\log n)$ queries.
\end{proof}

A consequence of \Cref{lemma:discover-edge} is that we can reconstruct $\tilde{G}$ in $\cO(m \log n)$ by iterating through each vertex and discovering all of its neighbors, thus reproducing a result of \cite[Section 2]{rubinstein_et_al:LIPIcs.ITCS.2018.39}.

\section{Computing Topological sorts} \label{section:topsort}

We shift our attention to DAGs. This is a natural subclass of directed graphs, as its lack of cycles means that the directions of all edges can be deduced.

Moreover, considering DAGs enables us to compute a topological sort of a graph. A topological sort gives queries more structure, as every edge would now go from left to right.
In particular, $\cross(A, B)$ is a function that counts the number of edges between $A$ and $B$ without taking their direction into account. 
On the other hand, if all the vertices of $A$ come before $B$ in the topological sort, we know that all the edges go from $A$ to $B$.
This type of selective query can then answer many reachability related questions. In particular, as we show in \Cref{section:applications}, we can easily solve the single-source reachability and single-source shortest path problems after computing a topological sort.

The remainder of this section will provide an efficient algorithm for computing the topological sort of a DAG and will prove its correctness.

\subsection{Topological sort of DAGs}

In the classical setting, the usual framework used to compute a topological sort is that at each iteration, a vertex with zero indegree is appended to the result and then removed from the graph. To find such a vertex, an algorithm could simply iterate through every vertex in the remaining graph. As an optimization, after removing a vertex, we can check all of its out-neighbors again, since their indegree decreases by one.

\paragraph*{Adapting to the cut query model.} Crucially, the two components that have to be adapted from the classical model are the computation of the indegree of a vertex and iterating through the out-neighbors of a vertex.

Assuming that we have so far computed a prefix of a topological sort $S'$, we define the \textit{remaining indegree} of a vertex $v$ to be its indegree in the graph $G - S'$. 

\begin{proposition}[Remaining indegree of $v$] \label{prop:remainingindegree}
    Let $S'$ be the prefix of a topological sort and $v \in V \setminus S'$. The indegree of $v$ in $G - S'$ is $\indegree(v) - \cross(S', v)$.
\end{proposition}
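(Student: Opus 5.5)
The plan is to split the in-edges of $v$ according to where their tail lies, and then use the prefix property of $S'$ to show that $\cross(S', v)$ counts exactly those in-edges whose tail is in $S'$.

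First, write $\indegree(v) = |\{u \to v : u \in V\}|$ as a disjoint union of the edges with $u \in S'$ and those with $u \in V \setminus S'$. Since $v \notin S'$, the edges of $G - S'$ entering $v$ are precisely the in-edges of $v$ with tail in $V \setminus S'$. Hence the indegree of $v$ in $G - S'$ equals
\[
\indegree(v) - |\{u \to v : u \in S'\}|.
\]

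It remains to show that $|\{u \to v : u \in S'\}| = \cross(S', v)$. By definition, $\cross(S', v)$ counts the edges $u \to v$ with $u \in S'$ plus the edges $v \to u$ with $u \in S'$. So we must rule out edges $v \to u$ with $u \in S'$, and this is the only step needing an argument. Here $S'$ is a prefix $S_k$ of some topological ordering $S$ of $G$, and $v \notin S'$, so $v = s_j$ with $j > k$. Any $u \in S'$ is $s_i$ with $i \le k < j$. An edge $v \to u$ would be $s_j \to s_i$ with $j > i$, contradicting \Cref{def:topsort}. Therefore the second term vanishes and $\cross(S', v) = |\{u \to v : u \in S'\}|$.

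Substituting this into the displayed expression gives the claim. By \Cref{lemma:primitives}, after preprocessing the degrees, this quantity can also be evaluated with $\cO(1)$ further cut queries.
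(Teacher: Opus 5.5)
Your proof is correct and follows the paper's route. The key step in both is that the prefix property rules out edges from $v$ into $S'$, so $\mathrm{cross}(S', v)$ counts exactly the in-edges of $v$ whose tail lies in $S'$. You make explicit the split of $\mathrm{deg}^{-}(v)$ by tail location, which the paper's one-line argument leaves implicit.
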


\begin{proof}
    Since $S'$ is the prefix of a topological sort, there is no edge from $v$ to $S'$. Therefore, all the edges between $S'$ and $v$ are directed towards $v$.
\end{proof}

To iterate through the out-neighbors of a vertex $v$, we can apply \Cref{lemma:discover-edge} repeatedly. Since all the edges between $v$ and $V \setminus S'$ are directed towards the latter, all the discovered edges are between $v$ and its out-neighbors.

Combining the above, we obtain \Cref{alg:topsort-brute} which gives the adaptation of the topological sort algorithm into the cut query model.

\begin{algorithm}
    \begin{algorithmic}[1]
        \item[]\textbf{Input:} Cut query oracle function $\query$.
        \item[]\textbf{Output:} Topological ordering $S'$.

        \item[]

        \Procedure{Append}{$u$}
            \State Append $u$ to $S'$.
            \State Discover all outgoing edges $u \to v$ using \Cref{lemma:discover-edge} repeatedly.
            \For{each discovered edge $u \to v$}
                \If{$\indegree(v) - \cross(S', v) = 0$} 
                    \Call{Append}{$v$} \Comment{\Cref{prop:remainingindegree}}
                \EndIf
            \EndFor
        \EndProcedure

        \item[]

        \For{$u \in V$}
            \If{$\indegree(u) - \cross(S', u) = 0$}
                \Call{Append}{$u$}
            \EndIf
        \EndFor

        \State \textbf{return} $S'$

    \end{algorithmic}
    \caption{Simple algorithm for computing a topological sort in the cut query model.}
    \label{alg:topsort-brute}
\end{algorithm}

Since discovering an edge uses $\mathcal{O}(\log n)$ queries by \Cref{lemma:discover-edge}, the total number of queries used by \Cref{alg:topsort-brute} is $\mathcal{O}(m \log n)$.

\paragraph*{Reducing the number of queries.}

While \Cref{alg:topsort-brute} performs well on sparse graphs, its performance drops for graphs with $\Omega(n^2)$ edges \ie{dense graphs}, as it would need to reconstruct the entire graph. For the latter, the algorithm would use $\Theta(n^2 \log n)$ queries, which is worse than the quadratic baseline.

If a vertex $v$ has a small \textit{remaining indegree}, then, intuitively, we can afford to discover all of its incoming edges. The problematic case is when its \textit{remaining indegree} is large. To resolve this issue, we will check $v$ separately. Fortunately, if a vertex $v$ has a large \textit{remaining indegree} $d$, then $v$ must wait for the $d$ distinct in-neighbors to be placed first in the topological sort. This implies that a vertex with a large indegree can be checked only a small number of times.

\begin{observation} \label{obs:indegreehops}
    Let $G = (V, E)$ be a DAG and $v \in V$ a vertex. Then, for any topological sort $S_n = [s_1, \ldots, s_n]$ and integer $k$ such that $v \not\in S_k$, it holds that $v \not\in S_{k + d}$ where $d := \indegree(v) - \cross(S_k, v)$.
\end{observation}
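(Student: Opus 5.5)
We argue directly from the definition of a topological sort together with \Cref{prop:remainingindegree}. Fix a topological sort $S_n = [s_1, \ldots, s_n]$ and an index $k$ with $v \notin S_k$. The prefix $S_k$ is a prefix of a topological sort, so by \Cref{prop:remainingindegree} the quantity $d = \indegree(v) - \cross(S_k, v)$ is exactly the indegree of $v$ in $G - S_k$. Equivalently, $d$ counts the in-neighbors of $v$ that lie outside $S_k$. Concretely, since no edge goes from $v$ back into $S_k$, every edge between $S_k$ and $v$ is an in-edge of $v$. Subtracting these from $\indegree(v)$ leaves precisely the edges $u \to v$ with $u \in V \setminus S_k$. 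Because $G$ is simple, these edges come from $d$ distinct vertices $u_1, \ldots, u_d \in V \setminus S_k$.

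By \Cref{def:topsort}, each edge $u_i \to v$ forces $u_i$ to occur strictly before $v$ in $S_n$. Let $j$ be the position of $v$, so $v = s_j$. Then $j > k$, and every $u_i$ has a position in the range $k+1, \ldots, j-1$, since $u_i \notin S_k$ and $u_i$ precedes $v$. These $d$ distinct vertices occupy $d$ distinct positions in this range, so $j - 1 - k \ge d$, that is, $j \ge k + d + 1$. Hence $v \notin S_{k+d}$, as claimed.

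The argument is essentially a counting step, and the only point needing care is the identification of $d$ with the number of distinct in-neighbors outside $S_k$. This uses two facts: simplicity, so that there are no parallel edges and $d$ counts vertices rather than edges, and the fact that $S_k$ is a topological prefix, which is exactly the hypothesis of \Cref{prop:remainingindegree}. Once these are in place, the pigeonhole bound on positions finishes the proof.
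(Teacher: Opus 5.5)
Your proof is correct and follows the same route as the paper: you use \Cref{prop:remainingindegree} to identify $d$ with the indegree of $v$ in $G - S_k$, and then observe that all $d$ of these in-neighbors lie outside $S_k$ and must precede $v$. Your explicit count, placing the $d$ distinct in-neighbors in distinct positions among $k+1, \ldots, j-1$, is a slightly more careful version of the paper's one-line conclusion that $v$ must come after $k+d$ vertices.
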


\begin{proof}
    Since $S_n$ is a topological sort, by \Cref{prop:remainingindegree}, the value of $d$ is the indegree of $v$ in $G - S_k$. By assumption, $v$ cannot be part of the first $k$ elements of a topological sort.

    Let $u \in V \setminus S_k$ be a vertex such that there is an edge $u \to v$. Vertex 
$u$ must precede $v$ in the topological sort. Since $v$ must go after all $d$ such vertices, then $v$ must go after $k + d$ vertices in total. 
\end{proof}

We classify vertices as \textit{heavy} or \textit{light} by their remaining indegrees in relation to an integer threshold $B$. If at a point of the algorithm, its remaining indegree is \textit{observed} to be positive and less than equal to $B$, we say that $v$ is \textit{light}. Otherwise, if the remaining indegree is greater than $B$, we say that $v$ is \textit{heavy}. If $d = 0$, we can append $v$ to the topological ordering $S'$. A vertex that is appended to the topological sort is neither \textit{heavy} nor \textit{light}. Note that a vertex is labeled as \textit{heavy/light} with regards to the \textit{observed} indegree. To illustrate this, consider a \textit{heavy} vertex that is currently being delayed as a consequence of \Cref{obs:indegreehops}. After removing multiple vertices, its indegree may become small. In this case, the vertex remains \textit{heavy} until it is checked again by the algorithm.

Intuitively, \textit{light} vertices are processed exactly as in \Cref{alg:topsort-brute}, while \textit{heavy} vertices are considered separately. When a \textit{heavy} vertex is checked and still has a large indegree, it is delayed by its remaining indegree so that it waits for its in-neighbors to be placed.

We combine the above insights into \Cref{alg:topsort}. At the beginning of the algorithm, we assume every vertex to be \textit{heavy}. The algorithm runs until a topological sort is computed. At each iteration, the algorithm checks every available heavy vertex and computes its remaining indegree according to \Cref{prop:remainingindegree}. If a vertex has large indegree $d$, it is ignored until at least $d$ more vertices are appended to $S'$, following \Cref{obs:indegreehops}. If it has a small positive indegree, the vertex is marked as light. If it has an indegree equal to zero, it is appended to the topological sort. Whenever a vertex is appended to the topological sort, all of its \textit{light} out-neighbors are discovered and then each one is checked whether it should be also appended to the topological sort, similarly to \Cref{alg:topsort-brute}.

\begin{algorithm}
    \begin{algorithmic}[1]
        \item[]\textbf{Input:} Cut query oracle function $\query$ and heaviness threshold parameter $B$.
        \item[]\textbf{Output:} Topological ordering $S'$.

        \item[]

        \State Compute $\indegree(v)$ and $\outdegree(v)$ to enable \Cref{lemma:primitives}.
        \State Mark all vertices as \textit{heavy}.

        \Procedure{Append}{u}
            \State Append $u$ to $S'$.
            \State Discover all the \textit{light} out-neighbors of $u$ using \Cref{lemma:discover-edge}.

            \For{each discovered edge $u \to v$}
                \State Let $d$ be the remaining indegree of $v$. \Comment{\Cref{prop:remainingindegree}.}
                \If{$d = 0$} \Call{Append}{$v$} \EndIf
            \EndFor
        \EndProcedure

        \item[]

        \While{$|S'| < n$}
            \For{each \textit{heavy} vertex $u$}
                \State Let $d$ be the remaining indegree of $u$.
                \If{$d = 0$} \Call{Append}{$u$}
                \ElsIf{$d \le B$} Mark $u$ as \textit{light}.
                \Else{} Avoid checking $u$ until $d$ more vertices are added to $S'$.
                \EndIf
            \EndFor
        \EndWhile

    \end{algorithmic}
    \caption{Optimized algorithm for computing a topological sort in the cut query model.}
    \label{alg:topsort}
\end{algorithm}

For the remainder of this subsection, we prove that \Cref{alg:topsort} returns a topological sort and it does so efficiently (\Cref{thm:algtopsort}).

For the next lemma, we denote by $n'$ to be the size of $S'$ at any point in the algorithm. We denote $S'_k := [s'_1, s'_2, \ldots, s'_k]$  to be the prefix of length $k$ of $S'$.

\begin{lemma} \label{lemma:topsortprefix}
    Let $G = (V, E)$ be a DAG and consider the sequence $S'_{n'} = [s'_1, \ldots, s'_{n'}]$ defined in \Cref{alg:topsort} at any point in time. Then, $S'_{n'}$ is the prefix of a topological ordering of $G$.
\end{lemma}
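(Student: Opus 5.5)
The proof goes by induction on $n'$, the length of $S'$, with the invariant that $S'_{n'}$ is the prefix of a topological ordering of $G$. The base case $n'=0$ is immediate: the empty sequence is a prefix of any topological ordering, and one exists because $G$ is a DAG. The first step is to reformulate the invariant in a form that is easy to carry forward. A sequence $P=[p_1,\dots,p_k]$ of distinct vertices is a prefix of a topological ordering if and only if every $p_i$ has no in-neighbour in $V\setminus\{p_1,\dots,p_i\}$, that is, every in-neighbour of $p_i$ lies in $\{p_1,\dots,p_{i-1}\}$.
\begin{itemize}
\item For the forward direction, take a topological ordering that extends $P$. Every edge into $p_i$ must come from an earlier position, and all earlier positions belong to $P$.
\item For the converse, extend $P$ by any topological ordering of the DAG $G - P$. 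An edge between $P$ and $V\setminus P$ cannot point into $P$ by the condition, and edges inside each part respect the order by construction.
\end{itemize}

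For the inductive step, $S'$ changes only inside \textsc{Append}$(u)$. There are two call sites: the main while loop and the recursive call inside \textsc{Append}. In both places the call is made only after the algorithm has computed the remaining indegree $d=\indegree(u)-\cross(S'_{n'},u)$ and found $d=0$. This value is computed with respect to the current $S'$ at that moment, i.e.\ immediately before $u$ is appended, so the check is not stale. By the induction hypothesis, $S'_{n'}$ is the prefix of a topological ordering. Then \Cref{prop:remainingindegree} applies, and $d$ is exactly the indegree of $u$ in $G-S'_{n'}$. So $d=0$ means every in-neighbour of $u$ already lies in $S'_{n'}$. By the characterisation above, $S'_{n'+1}=S'_{n'}\circ u$ is again the prefix of a topological ordering.

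The remaining checks are routine. First, $u\notin S'$, so $S'$ stays a sequence of distinct vertices. Heavy vertices that get appended stop being heavy, so the while loop never revisits them. A light vertex $v$ is discovered only through an edge $u\to v$ with $v\notin S'$, since edges out of the prefix point outward. Second, the nested recursion does not break the argument: each inner \textsc{Append} recomputes $d$ against the then-current $S'$, so the induction is really on the sequence of append events and not on the loop structure.

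The main obstacle I expect is exactly this timing issue. The proof must make sure that every zero-indegree test uses the current prefix and not an outdated snapshot. Any $d$ computed earlier and reused later would invalidate the use of \Cref{prop:remainingindegree}. The heavy-vertex delays of \Cref{obs:indegreehops} never cause an append without a fresh check, so they do not affect correctness.
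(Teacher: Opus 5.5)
Your proof is correct and follows the same route as the paper. You argue by induction over append events, apply \Cref{prop:remainingindegree} to the prefix current at the moment of the append so that $d=0$ means all in-neighbours are already placed, and then concatenate with a topological ordering of $G-S'$; your two-sided characterisation even makes explicit the use of the induction hypothesis for earlier vertices, which the paper's concatenation step leaves implicit.

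One small bookkeeping point, of exactly the ``stale snapshot'' kind you worry about, affects your distinctness check. The list of light out-neighbours is fixed when \textsc{Append}$(u)$ starts, so a vertex on it may already have been appended by a nested call when the loop reaches it. Distinctness then needs one of two things, not merely that $v\notin S'$ held at discovery time:
\begin{itemize}
\item such a vertex is skipped, as the paper implicitly assumes, since it is no longer light; or
\item the test fails: computing $\mathrm{cross}(S',v)$ via the $\mathrm{int}$ identity gives $0$ for $v\in S'$, so $d=\mathrm{deg}^{-}(v)\ge 1$.
\end{itemize}
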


\begin{proof}
    Firstly, $S'_{n'}$ does not contain duplicates. That is because once a vertex is appended to $S'$, it is neither \textit{heavy} nor \textit{light} anymore. Therefore, it is never considered again to be appended.

    We will now prove the lemma by induction for each prefix of $S'$. That is, we prove that $S'_i$ is a prefix of a topological sort for every $i$ with $0 \le i \le n'$.

    If $i = 0$, since $G$ is a DAG, there exists a topological sort $S$ of $G$. The empty prefix $S'_0$ is trivially a prefix of $S$. 

    For the inductive step, we will prove that if the statement is true for $i - 1$, then it is also true for $i$.

    Consider the moment when $s'_i$ is added (Line~8 or Line~12). In both cases, since $S'_{i - 1}$ is a prefix of a topological sort, then, by \Cref{prop:remainingindegree}, the value $d$ is the indegree of $s'_i$ in graph $G - S'_{i - 1}$. Since $s'_i$ is appended only if $d = 0$, then there is no edge $u \to v$ such that $u \in V \setminus S'_{i}$.

    Consider the graph $G - S'_{i}$. Clearly, it must be a DAG, hence, it has a topological sort $S_{n - i}$. If we concatenate $S'_i$ and $S_{n - i}$, we obtain a valid topological sort, as every vertex appears exactly once and there cannot be an edge $u \to v$ that goes from right to left, as proven in the previous paragraph.
\end{proof}

\begin{lemma} \label{lemma:correctness}
    \Cref{alg:topsort} terminates and outputs a topological sort of DAG $G$. 
\end{lemma}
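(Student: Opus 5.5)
By \Cref{lemma:topsortprefix}, whatever $S'$ is at termination is a prefix of a topological ordering. If the algorithm terminates with $|S'| = n$, that prefix is the whole ordering. So the work is to prove termination, and the plan is to show that every iteration of the while loop appends at least one vertex. Since no vertex is appended twice (shown in the proof of \Cref{lemma:topsortprefix}), there are then at most $n$ iterations, and each iteration does finitely much work.

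The invariant to establish is: \emph{every vertex $v \notin S'$ with remaining indegree $0$ in $G - S'$ is either heavy, or light and already appended.} In other words, no light vertex outside $S'$ ever has remaining indegree $0$.

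A vertex becomes light only when its observed remaining indegree $d$ satisfies $0 < d \le B$. Its remaining indegree can drop afterwards only when one of its in-neighbours $u$ is appended. When $u$ is appended, \textsc{Append}$(u)$ discovers all light out-neighbours of $u$ via \Cref{lemma:discover-edge}, and $v$ is among them. Since $S'$ is a prefix of a topological sort (\Cref{lemma:topsortprefix}), every edge between $u$ and $V \setminus S'$ is directed away from $u$, so the discovered undirected edge really is $u \to v$. \textsc{Append}$(u)$ then recomputes $v$'s remaining indegree through \Cref{prop:remainingindegree} and appends $v$ if it is $0$.

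One subtlety: a recursive \textsc{Append} call may append another in-neighbour $w$ of $v$ before $u$'s loop reaches $v$. That is harmless, because $w$'s own \textsc{Append} call also examines $v$. Hence, at the moment the last in-neighbour of a light vertex is appended, that vertex is appended as well, and the invariant holds.

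Now take any iteration of the while loop that starts with $|S'| < n$. The graph $G - S'$ is a nonempty DAG, so it has a source $v$ with remaining indegree $0$. By the invariant, $v$ is heavy. If $v$ is checked in the for loop, it is appended.

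The main obstacle is the delay rule: $v$ might currently be postponed from some earlier check with observed indegree $d$. By \Cref{obs:indegreehops}, $v$ cannot be placed before $d$ more vertices are appended. Applied with the actual topological ordering that extends $S'$, this means $v$'s remaining indegree cannot have reached $0$ before those $d$ vertices were added. Indeed, all $d$ in-neighbours of $v$ that were outside $S'$ at the time of the check must themselves enter $S'$, so at least $d$ appends have happened since. Therefore a heavy vertex with remaining indegree $0$ is never in its waiting period, the for loop checks it, and it is appended.

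So every iteration makes progress, the algorithm terminates after at most $n$ iterations, and at termination $|S'| = n$. Combined with \Cref{lemma:topsortprefix}, the output is a topological sort of $G$.
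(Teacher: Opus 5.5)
Your proof is correct and takes essentially the same route as the paper. You reduce to termination via \Cref{lemma:topsortprefix}, show that every while-iteration appends a source of $G - S'$, handle the light case through the \textsc{Append} call of its last-appended in-neighbour, and rule out the postponed-heavy case via \Cref{obs:indegreehops}; your treatment of recursive \textsc{Append} calls and your direct count of the $d$ appended in-neighbours are just more explicit versions of the paper's steps.
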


\begin{proof}
    It suffices to show only termination, as, by applying \Cref{lemma:topsortprefix}, correctness follows immediately.

    We show that at each iteration of the while-loop at Line~9, at least one element is added.

    Assume that at an iteration, the length of $S'$ is $n'$. Since $S'$ is the prefix of a topological sort, then there must be a vertex $v \in V \setminus S'_{n'}$ with remaining indegree zero.

    If $v$ is \textit{light}, consider the in-neighbor $u$ that was last added to the topological sort. Such a vertex exists because at Lines~11-14, $v$ becomes light if its remaining indegree is positive. Since $u$ is appended to $S'$ and then the for loop at Line~6 iterates through edge $u \to v$, vertex $v$ will then also be appended to $S'$.

    On the other hand, if $v$ is \textit{heavy}, it cannot hold that $v$ is ignored as a result of Line~14. Since $v$ has indegree zero in $G - S'$, then $S' + [v]$ is a valid prefix of a topological sort. Ignoring $v$ would then contradict \Cref{obs:indegreehops}. Therefore, $v$ must be iterated through by the for loop at Line 10, hence it will be appended to $S'$.
\end{proof}

Finally, we combine all the above lemmas to show that \Cref{alg:topsort} can be configured to run in subquadratic number of queries. To do so, it remains to bound the number of used queries and then pick an appropriate parameter $B$. We show that the number of checks on \textit{heavy} vertices is inversely proportional to $B$ and that the number of checks on \textit{light} is proportional to $B$. Then, as is custom for hybrid algorithms, the two bounds are balanced to obtain $B = \sqrt{n / \log n}$.

\algtopsort

\begin{proof}
    By \Cref{lemma:correctness}, \Cref{alg:topsort} terminates and returns a valid topological sort.

    \Cref{alg:topsort} queries the oracle at Line~1, Lines~5-7 and at Line~11. We will bound each type of query calls separately in terms of $n$ and parameter $B$.

    By \Cref{lemma:primitives}, there are $\mathcal{O}(n)$ queries at Line~1.
    The number of queries at Lines~5-7 is $\mathcal{O}(\log n)$ for each discovered edge due to \Cref{lemma:discover-edge}. Since $\textsc{Append}(u)$ discovers only edges coming out of $u$ and $\textsc{Append}(u)$ is called exactly once for each vertex, any edge is discovered at most once.

    Consider the moment a vertex $v$ becomes \textit{light}. Vertex $v$ can only be marked light at Line~13. Let $k$ be the length of $S'$ at said moment. No edge $s'_i \to v$ is discovered, since Line~5 discovers only edges directed towards \textit{light} vertices. Since for any $i \le  k$, vertex $v$ is not \textit{light} when $\textsc{Append}(s'_i)$ is called, then we have that only edges with $u \to v$ with $u \not\in S'_k$ are discovered. On the other hand, this number of edges is the \textit{remaining indegree} defined in \Cref{prop:remainingindegree} and computed at Line~11. Since $v$ is marked as \textit{light}, it must hold that $d \le B$, hence, for each vertex, we discover at most $B$ incoming edges. By summing over all vertices, we get that the number of queries at Lines~5-7 is at most $\mathcal{O}(nB \log n)$, the $\log$ factor coming from \Cref{lemma:discover-edge}.

    For the calls at Line~11, consider a vertex $u$ and every moment where $u$ is iterated through at Line~10. Let $k$ be the number of iterations through $u$ and let $n'_1, n'_2, \ldots, n'_k$ be the size of $S'$ at each of those moments. By \Cref{obs:indegreehops}, for any two adjacent values of $n'$, we have that $n'_{i + 1} - n'_{i} = d$ with $d$ being the remaining indegree of $u$ when $|S'| = n'_i$. By Line 14, it then holds that $d > B$, hence, $n'_{i + 1} - n'_{i} > B$. By summing up the inequalities for all adjacent values of $n'$, we get that $n'_{k} - n'_1 > (k - 1)B$. Since $n'_{k} \le n$ and $n'_1 = 0$, by rearranging terms we obtain $k < n / B + 1$. Therefore, for each vertex $u$, Line~11 is executed $\mathcal{O}(n / B)$ times. Consequently, the total number of queries from Line~11 is $\mathcal{O}(n^2 / B)$.

    By adding the two bounds together, we obtain $\mathcal{O}(n) + \mathcal{O}\left( n^2 / B \right) + \mathcal{O}\left( nB \log n \right)$ queries. If we set $B$ to $\sqrt{n / \log n}$, we obtain an algorithm that performs $\mathcal{O}\left( n \sqrt{n \log n}  \right) $ queries.
\end{proof}

Notably, \Cref{thm:algtopsort} shows that computing a topological sort can be done in a subquadratic number of queries, beating the trivial baseline algorithm. As a consequence, other problems on DAGs can be also solved subquadratically by exploiting the topological sort. Such algorithms are presented in \Cref{section:applications}.

\subsection{Topological Sorting of Strongly Connected Components}

\newcommand{\CC}{\mathcal{C}}
\newcommand{\SetS}{\mathcal{S}}
\newcommand{\SCC}{\mathrm{SCC}}

When we attempt to generalize our result to all directed graphs, complications appear due to the existence of cycles.
As stated before, the directions of the edges on a directed cycle cannot be determined.
Functions that relied on the structure of DAGs do not offer the same guarantees anymore.
Moreover, since every vertex may now have a positive indegree, \Cref{alg:topsort} does not work anymore.

In this subsection, we show that if we already know the strongly connected components of our graph, we can then topologically sort them, which would lead to a reachabality algorithm for general graphs.
Briefly, by condensing a general directed graph into its strongly connected components, we obtain a DAG, which can then be topologically sorted. Since previously used subroutines and properties still hold on the condensation graph, we can then reduce the single-source reachability problem on the original graph $G$ to deciding single-source reachability starting from $\SCC(s)$ in the condensed graph.
This suggests that finding the strongly connected components is the core challenge in finding a reachability algorithm for general directed graphs.
In particular, it raises the interesting intermediate open problem of detecting whether a graph is strongly connected using polynomially less than $n^2$ queries.

Since the framework used to prove \Cref{thm:algtopsortscc} is similar to that of \Cref{thm:algtopsort} and the proofs flow nearly identically, we will only highlight the differences.
Assume that graph $G$ has $k$ strongly connected components. We denote the partition of vertices into strongly connected components by $\CC = \{C_1, C_2, \ldots, C_k\}$. A topological sort of SCCs $\SetS = [S_1, S_2, \ldots, S_k]$ is defined as a permutation of $\CC$. Similarly, we define the prefix of length $i$ of a sequence as $\SetS_i = [S_1, S_2, \ldots, S_i]$. As notation abuse, we sometimes treat a set of SCCs as a set of vertices. For instance, by $V \setminus \SetS$ we denote $V \setminus \left( \bigcup_{S \in \SetS} S \right)$.
The analogous version of \Cref{prop:remainingindegree} holds for SCC version by the same argument.

\begin{observation}[Remaining indegree of an SCC] \label{prop:sccremindegree}
    Let $\SetS'$ be the prefix of a topological sort of SCCs and $C \in \CC \setminus \SetS'$. The indegree of $C$ in $G - \SetS'$ is $\querycomp(C) - \cross(\SetS', C)$.
\end{observation}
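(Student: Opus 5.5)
The plan mirrors the proof of \Cref{prop:remainingindegree}, lifted from vertices to strongly connected components. First I interpret the quantity precisely. Here $\querycomp(C)$ counts the edges $u \to v$ with $u \notin C$ and $v \in C$; these are exactly the edges entering $C$ from the rest of $G$. The indegree of $C$ in $G - \SetS'$ is the number of edges $u \to v$ with $v \in C$ and $u \in V \setminus (\SetS' \cup C)$. Since $C \notin \SetS'$ and the components are disjoint, $C$ and the union of $\SetS'$ are disjoint vertex sets. Hence the edges counted by $\querycomp(C)$ split into two groups: those whose tail lies in $\SetS'$, and those whose tail lies in $V \setminus (\SetS' \cup C)$. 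The second group is precisely the remaining indegree we want.

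It then suffices to show that the first group has size exactly $\cross(\SetS', C)$. By definition, $\cross(\SetS', C)$ counts the edges between the two sets in either direction. So the key step is to show that no edge goes from $C$ into $\SetS'$.

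\begin{itemize}
\item Since $\SetS'$ is a prefix of a topological ordering of SCCs, say $[S_1, \ldots, S_k]$, we have $\SetS' = [S_1, \ldots, S_i]$ and $C = S_j$ for some $j > i$.
\item Any edge $u \to v$ with $u \in C = S_j$ and $v \in S_l$ for some $l \le i$ joins two different components ($l \neq j$).
\item By the definition of a topological ordering of SCCs, such an edge forces $j < l$, contradicting $l \le i < j$.
\end{itemize}

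Therefore every edge between $\SetS'$ and $C$ is directed into $C$. Subtracting $\cross(\SetS', C)$ from $\querycomp(C)$ then leaves exactly the edges into $C$ whose tails lie outside $\SetS' \cup C$, which is the indegree of $C$ in $G - \SetS'$.

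Two small points need care. First, edges inside $C$ are not counted by $\querycomp(C)$ and play no role, so the cycles inside components cause no trouble. Second, we must read "indegree of $C$" as the indegree of the contracted vertex $C$ in the condensation of $G - \SetS'$, counted with edge multiplicities. I expect no real obstacle: the only substantive step is the direction argument, and it follows directly from the definition of a topological ordering of SCCs.
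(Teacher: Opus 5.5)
Your proposal is correct and follows the paper's argument, which only remarks that the proof of \Cref{prop:remainingindegree} carries over: since $\SetS'$ is a prefix of a topological order of SCCs, no edge goes from $C$ into $\SetS'$. Hence every edge counted by $\cross(\SetS', C)$ enters $C$, and subtracting it from $\querycomp(C)$ leaves exactly the edges entering $C$ from $V \setminus (\SetS' \cup C)$; you simply make the decomposition and the index comparison explicit.
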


We now adapt \Cref{obs:indegreehops} for SCCs. Similarly, if an SCC has large indegree, we want to delay it for long enough. An issue is that all the vertices of an SCC may share in-neighbors. Fortunately, we can take the average of the indegree. Then, there must be a vertex in the SCC which must wait for enough in-neighbors to be placed in the topological sort.
For any $i$, we write $V(\SetS_i)$ for all vertices in any component in $\SetS_i$. 

\begin{observation} \label{obs:indegreehops-scc}
    Let $G = (V, E)$ be a directed graph and $C \in \CC$ an SCC of $G$. Then, for any topological sort over SCCs $\SetS = [S_1, S_2, \ldots, S_{k}]$ and integer $i$ such that $C \not\in \SetS_i$, then it holds that $C \not\in \SetS_j$ for any $j$ such that $|V(\SetS_j)| \le |V(\SetS_i)| + \left\lceil   \frac{d}{|C|} \right\rceil$ where $d := \querycomp(C) - \cross(\SetS_i, C)$.
\end{observation}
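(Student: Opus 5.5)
The plan is a counting argument on distinct in-neighbours, mirroring the proof of \Cref{obs:indegreehops} but with vertices replaced by SCCs. Fix the topological sort of SCCs $\SetS$, the index $i$ with $C \notin \SetS_i$, and $d := \querycomp(C) - \cross(\SetS_i, C)$.

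First I dispose of the trivial direction. If $j \le i$, then $\SetS_j$ is a prefix of $\SetS_i$, so $C \notin \SetS_j$ and there is nothing to prove. I therefore assume $j > i$. I argue by contradiction: suppose $C \in \SetS_j$, and I will show $|V(\SetS_j)| > |V(\SetS_i)| + \lceil d/|C| \rceil$.

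Second, I interpret $d$ via \Cref{prop:sccremindegree}. $\querycomp(C)$ counts the edges entering $C$ from $V \setminus C$. Since $\SetS_i$ is a prefix of a topological sort of SCCs, every edge between $V(\SetS_i)$ and $C$ is directed into $C$. Hence $d$ is exactly the number of edges $u \to v$ with $v \in C$ and $u \in U := V \setminus (V(\SetS_i) \cup C)$. Let $N \subseteq U$ be the set of distinct tails of such edges.

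Third, I lower-bound $|N|$; this is the key step. Because $G$ is simple, each $u \in N$ has at most one edge to each vertex of $C$, so it contributes at most $|C|$ of the $d$ edges. Therefore $|N|\cdot|C| \ge d$, and since $|N|$ is an integer, $|N| \ge \lceil d/|C| \rceil$.

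Fourth, I place these vertices in the order. Every $u \in N$ lies in some SCC $\SCC(u) \neq C$ with an edge into $C$. By the definition of a topological sort of SCCs, $\SCC(u)$ must precede $C$ in $\SetS$. So if $C \in \SetS_j$, then $V(\SetS_j)$ contains:
\begin{itemize}
  \item all of $V(\SetS_i)$, since $j > i$;
  \item all of $N$, which is disjoint from $V(\SetS_i)$ and from $C$;
  \item all of $C$.
\end{itemize}
These three sets are pairwise disjoint, so
\[
  |V(\SetS_j)| \ge |V(\SetS_i)| + \lceil d/|C| \rceil + |C| > |V(\SetS_i)| + \lceil d/|C| \rceil ,
\]
using $|C| \ge 1$. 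This contradicts the assumption $|V(\SetS_j)| \le |V(\SetS_i)| + \lceil d/|C| \rceil$, so $C \notin \SetS_j$. The case $d = 0$ is covered by the same inequality.

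The main subtlety is the third step. Unlike the vertex case, many edges into $C$ may share a tail, so $d$ does not directly count predecessors. Dividing by $|C|$, which relies on simplicity, is exactly what recovers the bound stated in the observation.
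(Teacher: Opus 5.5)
Your proof is correct and follows essentially the paper's argument. Both use simplicity of $G$ and a pigeonhole count to obtain at least $\lceil d/|C|\rceil$ distinct in-neighbours of $C$ outside $V(\SetS_i)\cup C$, whose components must all precede $C$; the paper pigeonholes over the heads in $C$ to find one vertex with that many in-neighbours, while you bound the number of distinct tails directly, which is equivalent (and your final step is cleaner than the paper's, which writes $d$ where it means $\lceil d/|C|\rceil$).
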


\begin{proof}
    Similarly, $d$ is the indegree of $C$ in $G - \SetS'_i$. Since the number of incoming edges between $G - \SetS'_i$ and $C$ is additive over vertices of $C$, then there must be a vertex with at least $\left\lceil \frac{d}{|C|} \right\rceil$ incoming edges from outside of $C$.
    Since $C$ must go after at least $d$ in-neighbors of $C$, then $C$ cannot be part of the topological sort prefix of size $|V(\SetS_i)| + d$.
\end{proof}

Using the above, we obtain \Cref{alg:topsort-scc}. Conceptually, the only difference between this algorithm and its DAG counterpart is that at each point, we average the remaining indegree to the size of the SCC considered. An SCC is \textit{heavy} if its \textit{remaining average indegree} is larger than threshold $B$ and similarly, a \textit{light} SCC has an average indegree lesser than or equal to $B$. Then, the flow of the algorithm is exactly the same as with \Cref{alg:topsort}.

\begin{algorithm}
    \begin{algorithmic}[1]
        \item[]\textbf{Input:} Cut query oracle function $\query$, heaviness threshold $B$ and partition of vertices into SCCs.
        \item[]\textbf{Output:} Topological ordering $\SetS'$.

        \item[]

        \State Compute $\indegree(v)$ and $\outdegree(v)$ to enable \Cref{lemma:primitives}.
        \State Mark all SCCs as \textit{heavy}.

        \Procedure{Append}{U}
            \State Append $U$ to $\SetS'$.
            \State Discover all edges between $U$ and light SCCs using \Cref{lemma:discover-edge}.
            \For{each discovered edge $u \to v$}
                \State Let $d$ be the remaining indegree of $\SCC(v)$. \Comment{\Cref{prop:sccremindegree}.}
                \If{$d = 0$} 
                    \Call{Append}{$\SCC(v)$}

                \EndIf             
            \EndFor
        \EndProcedure

        \item[]

        \While{$|\SetS'| < n$}
            \For{each \textit{heavy} SCC $U$}
                \State Let $d$ be the remaining indegree of $U$.
                \If{$d = 0$} \Call{Append}{$U$}
                \ElsIf{$\frac{d}{|U|} \le B$} Mark $U$ as \textit{light}.
                \Else{} Avoid checking $U$ until $\setsize{V(\SetS')}$ increases by at least $\left\lceil \frac{d}{|U|} \right\rceil$.
                \EndIf
            \EndFor
        \EndWhile

    \end{algorithmic}
    \caption{Adapted algorithm to compute a topological sort of SCCs in the cut query model.}
    \label{alg:topsort-scc}
\end{algorithm}

\begin{lemma} \label{lemma:topsortprefix-scc}
    Let $G = (V, E)$ be a directed graph and $\CC$ be a partition of $V$ in SCCs. Consider the sequence $\SetS'_{k'}$ defined in \Cref{alg:topsort-scc} at any point in time. Then, $\SetS'_{k'}$ is the prefix of a topological sort of the SCCs of $G$.
\end{lemma}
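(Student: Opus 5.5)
We mirror the proof of \Cref{lemma:topsortprefix}, working on the condensation of $G$: contract each SCC in $\CC$ to a single node. The result is a DAG $G_\CC$, and topological sorts of SCCs of $G$ are exactly topological orderings of $G_\CC$. The argument has three steps.

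\textbf{No duplicates.} First we check that $\SetS'$ contains no component twice. Once a component is appended it is neither heavy nor light. Moreover, \textsc{Append} is only invoked on a component whose remaining indegree is $0$, and such a component is not already in $\SetS'$. A small point here: for a discovered edge $u \to v$ with $v$ in a light SCC, we should make sure $\SCC(v)$ has not already been appended. This holds because the discovery only targets light SCCs, and appended SCCs are not light.

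\textbf{Induction on $i$.} We prove that $\SetS'_i$ is a prefix of a topological sort of SCCs for all $0 \le i \le k'$. The base case $i=0$ is trivial, since $G_\CC$ is a DAG and therefore has a topological sort.

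For the inductive step, consider the moment $S'_i$ is appended, either inside \textsc{Append} (Line~8) or in the main loop (Line~12). By the hypothesis $\SetS'_{i-1}$ is a valid prefix, so every edge between $V(\SetS'_{i-1})$ and any $C \notin \SetS'_{i-1}$ points into $C$. Hence \Cref{prop:sccremindegree} applies: the computed $d = \querycomp(S'_i) - \cross(\SetS'_{i-1}, S'_i)$ is exactly the number of edges entering $S'_i$ from $V \setminus (V(\SetS'_{i-1}) \cup S'_i)$.

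Justifying this step is the main (mild) obstacle. $\querycomp(C)$ counts all edges entering $C$ from outside $C$; edges internal to $C$ are excluded automatically. Edges from $V(\SetS'_{i-1})$ into $C$ are counted by $\cross$, and by the prefix property there are no edges in the reverse direction between $C$ and the prefix. Subtracting therefore leaves precisely the edges into $C$ coming from the remaining components.

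\textbf{Concatenation.} Since $S'_i$ is appended only when $d=0$, no edge enters $S'_i$ from a component outside $\SetS'_i$. Take any topological sort $\SetS''$ of the SCCs of $G - V(\SetS'_i)$; this graph's condensation is again a DAG, and its SCCs are exactly the remaining members of $\CC$. Concatenate $\SetS'_i$ with $\SetS''$. Every component appears exactly once, so it remains to rule out an edge going from a later component to an earlier one.

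Such an edge could end in $\SetS'_{i-1}$, which is impossible because $\SetS'_{i-1}$ is a prefix and so all edges between it and the rest point outward. It could end in $S'_i$ from outside $\SetS'_i$, which is excluded by $d=0$. Or it could lie within $\SetS''$, which is excluded by the choice of $\SetS''$.

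Hence $\SetS'_i$ is a prefix of a topological sort of SCCs, completing the induction.
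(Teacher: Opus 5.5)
Your proof is correct and is essentially the argument the paper intends. The paper gives no separate proof for this lemma and defers to the proof of \Cref{lemma:topsortprefix}: no duplicates, induction on prefixes via the remaining-indegree formula, and concatenation with a topological sort of the remainder. You transfer that proof faithfully and spell out the SCC-specific details the paper leaves implicit, namely why $\querycomp(C)-\cross(\SetS'_{i-1},C)$ counts exactly the edges from unplaced components, and why the SCCs of $G - V(\SetS'_i)$ are the remaining members of $\CC$.

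One caveat in your no-duplicates step, which the paper's DAG argument shares. Line~5 discovers all edges from $U$ to light SCCs before the loop runs. A target SCC may therefore already lie in $\SetS'$ when a later edge into it is processed, for example one receiving two discovered edges from $U$, or one appended by a nested \textsc{Append}. So ``appended SCCs are not light'' does not by itself exclude a second append; you should read the test at Line~8 as also requiring $\SCC(v)\notin\SetS'$.
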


\begin{lemma} \label{lemma:correctness-scc}
    \Cref{alg:topsort-scc} terminates and outputs a topological sort of the SCCs of directed graph $G$.
\end{lemma}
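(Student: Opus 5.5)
We need to prove \Cref{lemma:correctness-scc}: \Cref{alg:topsort-scc} terminates and its output is a topological sort of the SCCs. We follow the proof of \Cref{lemma:correctness}. By \Cref{lemma:topsortprefix-scc}, at every moment $\SetS'$ is a prefix of a topological sort of the SCCs. Hence once the loop ends, when all components (equivalently all $n$ vertices) have been placed, $\SetS'$ is a full topological sort. Correctness therefore reduces to termination. For termination it suffices to show that every iteration of the while-loop appends at least one SCC, because there are only $k \le n$ components.

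Fix an iteration that starts with $\SetS'$ a proper prefix. The condensation of $G - V(\SetS')$ is a nonempty DAG, so it has a source component $C \in \CC \setminus \SetS'$. Every edge entering $C$ from outside $C$ starts in $V(\SetS')$. By \Cref{prop:sccremindegree} this means $\querycomp(C) - \cross(\SetS', C) = 0$. This uses two facts: edges internal to $C$ are not counted by $\querycomp(C)$, and edges between $\SetS'$ and $C$ all point into $C$ because $\SetS'$ is a prefix of a topological sort. We then split into two cases according to the label of $C$.

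\textbf{Case 1: $C$ is light.} $C$ became light at Line~13 with positive remaining indegree $d$. So there is an in-neighbour component of $C$ that was appended after that moment; let $U$ be the last such component appended. When $\textsc{Append}(U)$ ran, $C$ was already light, so Line~5 discovers the edges between $U$ and $C$. At that point every external in-neighbour of $C$ has been placed, so the remaining indegree computed at Line~7 is $0$, and $C$ is appended. Hence $C$ cannot remain unplaced and light at the start of an iteration. This is a contradiction unless some component is appended during the iteration.

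\textbf{Case 2: $C$ is heavy.} We must show $C$ is not still being skipped by Line~14. Suppose $C$ was last checked when $\SetS'$ was some prefix $\SetS_i$, with $d > B|C|$. By \Cref{obs:indegreehops-scc}, $C$ cannot belong to any prefix $\SetS_j$ of a topological sort extending $\SetS_i$ with $|V(\SetS_j)| \le |V(\SetS_i)| + \lceil d/|C| \rceil$. But $\SetS' + [C]$ is such a prefix, since $C$ has remaining indegree zero. So $|V(\SetS')| + |C| > |V(\SetS_i)| + \lceil d/|C|\rceil$.

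This inequality is the main obstacle: the waiting rule in Line~14 requires $|V(\SetS')|$ itself to have grown by $\lceil d/|C|\rceil$, not $|V(\SetS')| + |C|$. I would address it by re-deriving the observation with a sharper count. Some vertex $v \in C$ has at least $\lceil d/|C|\rceil$ distinct in-neighbours outside $V(\SetS_i)$ and outside $C$. All of them lie in components that must precede $C$. Hence $|V(\SetS')| - |V(\SetS_i)| \ge \lceil d/|C|\rceil$ whenever $C$ has remaining indegree zero in $G - V(\SetS')$, so the waiting period has expired. $C$ is then checked in the for-loop at Line~10, and since $d = 0$ there, it is appended at Line~12.

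In both cases at least one SCC is appended per iteration, which gives termination. (The loop guard $|\SetS'| < n$ should be read as $|V(\SetS')| < n$, or equivalently as fewer than $k$ components placed.)
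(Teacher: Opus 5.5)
Your proof is correct and is essentially the argument the paper intends; the paper omits this proof and defers to that of \Cref{lemma:correctness}. Like that proof, you reduce to termination via \Cref{lemma:topsortprefix-scc}, take a source component $C$ of the remaining condensation, rule out $C$ being light via its last-appended in-neighbour component, and show that a heavy $C$ is no longer being delayed. Your sharpened count in the heavy case is genuinely needed: one vertex of $C$ has $\lceil d/|C|\rceil$ distinct in-neighbours outside $C \cup V(\SetS_i)$, and all of them must lie in $V(\SetS')$. By contrast, \Cref{obs:indegreehops-scc} as stated only yields $|V(\SetS')| + |C| > |V(\SetS_i)| + \lceil d/|C|\rceil$, which does not match the waiting rule at Line~14. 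Your reading of the loop guard as $|V(\SetS')| < n$ is likewise necessary for termination.
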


We now select parameter $B$ to obtain the following result. \Cref{thm:algtopsortscc} assumes that the algorithm receives the partition of vertices into SCCs.

\algtopsortscc

\begin{proof}
    The analysis goes similarly to the proof of \Cref{thm:algtopsort}, relying on the analogous \Cref{obs:indegreehops-scc,lemma:correctness-scc,lemma:topsortprefix-scc}, with some differences that will be highlighted here.

    To bound the queries at Lines~5-7, we again count the number of edges that are discovered. Consider the moment when an SCC $U$ is marked as \textit{light}. Again, $U$ will have $d$ incoming edges with $d$ computed at Line~11. Since $U$ is added if $\frac{d}{|U|} \le B$, we get that $U$ has at most $B|U|$ incoming edges. By summing up over all SCCs, we get that we discover at most $B \sum_{i=1}^{k} |C_i| = Bn$ edges. Therefore, the number of queries at Lines~5-7 is $\mathcal{O}(nB \log n)$.

    To bound the queries at Line~11, we again consider for an individual SCC $U$ the moments it is iterated through at Line~10. Let $t$ be the number of iterations through $U$ and let $n'_1, n'_2, \ldots, n'_t$ be the sizes of $\SetS'$ at each of those moments. We define $l_i = |V(\SetS'_{n'_i})|$ for every  $i \in [t]$. We again bound the difference of adjacent elements: $l_{i + 1} - l_{i} = \left\lceil \frac{d}{|U|} \right\rceil > B$. By adding up all the inequalities, we obtain $n \ge l_{t} - l_1 > (t - 1)B$. Therefore, we obtain that an SCC is checked $\mathcal{O}(n / B)$ times. By summing over all SCCs, we can bound the number of queries at Line~11 by $\mathcal{O}(nk / B) = \mathcal{O}(n^2 / B)$.

    By picking $B = \sqrt{n / \log n} $, we obtain that \Cref{alg:topsort-scc} uses $\mathcal{O}(n \sqrt{n \log n})$ queries.
\end{proof}
\section{Applications} \label{section:applications}

One common approach on DAGs is to compute the answer to a problem via dynamic programming. By defining a subproblem for each vertex, we can then process the vertices in topological order and apply a recurrence relation. Two examples considered in this section are for \textit{single-source reachability} and \textit{single-source shortest-path}.

\subsection{Single-Source Reachability}

\Cref{alg:reachability} solves the single-source reachability problem. It considers every vertex in topological order and decides for each whether it is reachable from the source vertex $s$ using $\mathcal{O}(1)$ queries. Given that all edges are directed from left to right in a topological ordering, it suffices to find whether a vertex $v$ is connected to an already known reachable vertex.
To this end, set $R$ maintains the reachable vertices observed so far. 

\correachability

\begin{algorithm}[hhbp]
    \begin{algorithmic}[1]

    \item[]\textbf{Input:} Cut query oracle function $\query$ and source vertex $s$.
    \item[]\textbf{Output:} Set $R \subseteq V$ of vertices reachable from $s$.

    \State Let $v_1, v_2, \ldots, v_n$ be a topological sort computed using \Cref{thm:algtopsort}.
    \State $R \gets \emptyset$ 

    \For{$i \in [1, n]$}
        \If{$v_i = s$ \textbf{or} $\cross(R, v_i) > 0$}
            \State $R \gets R \cup \{v_i\}$
        \EndIf
    \EndFor
    
    \State{\textbf{return} $R$}

    \end{algorithmic}
    \caption{Computing reachability in cut query model.}
    \label{alg:reachability}
\end{algorithm}

\begin{proof}
    Consider \Cref{alg:reachability}. Clearly, it terminates, as \Cref{alg:topsort} also terminates, and Lines 2-9 use $\mathcal{O}(n)$ queries.

    We will prove by induction that for all $i$ with $0 \le i \le n$, after $i$ iterations, a vertex $x$ of $v_1, \ldots, v_i$ is added to $R$ if and only if $x$ is reachable from $s$.

    For the base case when $i = 0$, the statement trivially holds.

    We now prove the inductive step. By induction, all vertices $v_{j < i}$ are added to $R$ if and only if they are reachable from $s$. It therefore suffices to show the equivalence only for $v_i$.

    If $v_i$ is added to $R$, it then holds that either $v_i = s$, in which case the equivalence trivially holds, or $\cross(R \setminus \{v_i\}, v_i) > 0$ (Line 4). Since the vertices are considered according to a topological ordering, every edge is directed from left to right, hence the inequality implies that there is an edge $u \to v_i$ with $u \in R \setminus \{v_i\}$. By induction, there must also be a path $s \rightsquigarrow u$. This means that there is also a path from $s$ to $v_i$.

    Conversely, the statement trivially holds for $v_i = s$. Otherwise, assume there is a path from $s$ to $v_i$ and let $u$ be the last vertex on said path before $v_i$. Since $v$ is a topological ordering, and there is an edge $u \to v_i$, then $u$ must come before $v_i$ in the topological ordering, hence, $u \in \{v_1, \ldots, v_{i - 1}\}$. Since $u$ is also reachable from $s$, then, by the induction hypothesis, we have that $u \in R \setminus \{v_i\}$. The existence of edge $u \to v_i$ implies that $\cross(R \setminus \{v_i\}, v_i) > 0$. This concludes the proof for the induction step.

    Therefore, it holds that set $R$ contains all vertices reachable from $s$.

    In total, \Cref{alg:reachability} uses $\mathcal{O}(n \sqrt{n \log n}) + \mathcal{O}(n) = \mathcal{O}(n \sqrt{n \log n} )$ cut query oracle calls, the latter term coming from the calls at Lines 3-5.
\end{proof}

Note that the bottleneck of \Cref{alg:reachability} is finding the topological ordering due to \Cref{thm:algtopsort}, as the pass to actually compute reachability uses in total a linear number of queries.

\subsection{Single-Source Shortest Paths}
In this subsection, we will focus on \textit{single-source shortest path} in the cut query model for DAGs. Given a source vertex $s$, we aim to compute the shortest path between $s$ and every other vertex. To this end, for each vertex $v$ we compute the following information:

\begin{description}[noitemsep]
  \item[$\bm{D\left(v\right)}$] distance from $s$ to $v$, or $\bot$ if there is no path from $s$ to $v$, and
  \item[$\bm{\pred\left(v\right)}$] last vertex before $v$ in the shortest path from $s$ to $v$, or $\bot$ if there is no such path.
\end{description}

Then, to reconstruct the path from $s$ to $t$, we follow $\pred$ until we reach $s$. If we interpret the $\pred$ function as a tree rooted in $s$, then the shortest path from $s$ to $t$ in $G$ is given by the path from $s$ to $t$ in this tree.
In the classical model, the recurrence formula is:
\begin{gather*}
    D(s) = 0, \pred(s) = \bot, \\
    D(v) = \begin{cases}
        \infty & \text{if there is no $s$-$t$ path}, \\
        \min_{u \to v} D(u) + 1 & \text{otherwise},
    \end{cases} \\
    \pred(v) = \begin{cases}
        \bot & \text{if there is no $s$-$t$ path}, \\
        \arg \min_{u \to v} D(u) & \text{otherwise}.
    \end{cases}
\end{gather*}
To compute the recurrence in the classical model, we can iterate through each vertex in the topological order, then for each vertex we iterate through its in-neighbors.

Deciding whether there is a path from $s$ is easy, as \Cref{thm:reachability} can be used to filter out all unreachable vertices. The difficulty on the cut query model is that iterating through every in-neighbor implies reconstructing the graph. 
To resolve this issue, we can apply a modified version of \Cref{lemma:discover-edge} to discover the edge that yields the shortest path. 

\begin{lemma}[Discovering the minimal edge] \label{lemma:discover-minimal-edge}
    Let $A \subseteq V$ be a set, $v \in V$ be a vertex and $f : A \to \mathbb{R}$ a function. There is a procedure that discovers the edge $uv$ with $u \in A$ and $f(u)$ minimal or reports that there is no such edge using $\mathcal{O}(\log n)$ queries.
\end{lemma}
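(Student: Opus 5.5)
The idea is to run a binary search over the values of $f$ rather than over an arbitrary split of $A$. I would first sort $A$ by $f$, breaking ties arbitrarily, as $A = \{a_1, \dots, a_{|A|}\}$ with $f(a_1) \le f(a_2) \le \dots \le f(a_{|A|})$. This sort costs no queries, because $f$ is given explicitly to the procedure. Write $P_j := \{a_1, \dots, a_j\}$ for the prefix of the $j$ smallest elements. The target is the smallest index $j^\ast$ such that $\cross(P_{j^\ast}, v) > 0$; the answer is then the edge $a_{j^\ast} v$. We may assume $v \notin A$ (otherwise replace $A$ by $A \setminus \{v\}$). By \Cref{lemma:primitives}, after the $\mathcal{O}(n)$ degree preprocessing already done in \Cref{alg:topsort}, each evaluation of $\cross(P_j, v)$ costs $\mathcal{O}(1)$ queries.

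The steps are as follows. First, evaluate $\cross(A, v) = \cross(P_{|A|}, v)$. If it is $0$, report that there is no edge between $A$ and $v$. Otherwise the map $j \mapsto \cross(P_j, v)$ is nondecreasing in $j$, since it counts the edges between $v$ and a growing set, and additivity of $\cross$ in its first argument for disjoint sets (as in the proof of \Cref{lemma:discover-edge}) gives $\cross(P_j, v) - \cross(P_{j-1}, v) = \cross(a_j, v)$. Therefore the predicate ``$\cross(P_j, v) > 0$'' is monotone in $j$, and a standard binary search over $j \in [1, |A|]$ finds $j^\ast$ using $\mathcal{O}(\log |A|) = \mathcal{O}(\log n)$ evaluations, hence $\mathcal{O}(\log n)$ queries.

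For correctness: $\cross(P_{j^\ast}, v) > 0$ and $\cross(P_{j^\ast - 1}, v) = 0$, so $\cross(a_{j^\ast}, v) > 0$, which means $a_{j^\ast} v \in \tilde E$. Any other neighbour $a_j$ of $v$ in $A$ must have $j > j^\ast$, and hence $f(a_j) \ge f(a_{j^\ast})$. So $a_{j^\ast}$ is an in-$A$ neighbour minimizing $f$, with ties broken by the fixed order.

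I expect the only subtle point to be that the statement is about undirected edges, so the procedure cannot distinguish $u \to v$ from $v \to u$. For the lemma itself this is harmless. When the lemma is applied to shortest paths with $A$ a set of vertices preceding $v$ in the topological sort, every such edge is directed $u \to v$, exactly as in \Cref{alg:reachability}.
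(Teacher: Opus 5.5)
Your proposal is correct and is essentially the paper's proof. The paper also sorts $A$ by $f$ and binary searches, repeatedly halving the current candidate set and keeping the lower-$f$ half $A'$ whenever $\mathrm{cross}(A', v) > 0$, which is the same as your search for the smallest prefix $P_j$ with $\mathrm{cross}(P_j, v) > 0$; your explicit handling of $v \in A$, of ties, and of the monotonicity of the predicate spells out details the paper leaves implicit.
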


\begin{proof}
    If $\cross(A, v) = 0$, then there is no edge between $A$ and $v$, hence the procedure can report.
    On the other hand, similarly to the proof \Cref{lemma:discover-edge}, we maintain as an invariant that $\cross(A, v) > 0$ and split $A$ in half.

    In addition to \Cref{lemma:discover-edge}, to find the minimal edge, we order the vertices of $A$ in order of $f$. Then, we split $A$ so that $A'$ contains the vertices with smaller $f$ values. If $\cross(A', v) > 0$, we replace $A$ with $A'$. Otherwise, we replace $A$ with $A''$. Doing so ensures that the procedure always selects the set that contains the minimal edge.
    Since at each step, the size of $A$ halves, the procedure uses $\mathcal{O}(\log n)$ queries.
\end{proof}

\begin{algorithm}[H]
    \begin{algorithmic}[1]

    \item[]\textbf{Input:} Cut query oracle function $\query$ and source vertex $s$.
    \item[]\textbf{Output:} Values $D(v)$ and $\pred(v)$ for every vertex.

    \State Let $v_1, v_2, \ldots, v_n$ be a topological sort computed using \Cref{thm:algtopsort}.
    \State Remove vertices from $v_i$ that are not reachable from $s$ using \Cref{thm:reachability}. 
    \State $D(v_i) \gets \infty, \pred(v_i) \gets \bot$ for all unreachable vertices.
    \State $D(s) \gets 0, \pred(s) \gets \bot$

    \item[]

    \For{each reachable vertex $v_i$ excluding $s$}
        \State Find the edge $u \to v_{i}$ with $D(u)$ minimal and $u \in \{v_1, v_2, \ldots, v_{i - 1}\}$ using \Cref{lemma:discover-minimal-edge}.
    \State $D(v_i) \gets D(u) + 1$
        \State $\pred(v_i) \gets u$
    \EndFor

    \State \textbf{return} $D, \pred$

    \end{algorithmic}
    \caption{Algorithm that computes all shortest paths from $s$.}
    \label{alg:sssp}
\end{algorithm}

\algsssp

\begin{proof}
    We will prove correctness of \Cref{alg:sssp} by induction: for all $i$ with $0 \le i \le n$, values $D$ and $\pred$ are correctly computed for all vertices $v_1, v_2, \ldots, v_i$. Due to \Cref{alg:reachability}, we assume without loss of generality that all the vertices of $G$ are reachable from $s$.

    For the base case when $i = 0$, the statement trivially holds.

    For the inductive step, we prove that if the statement holds for $i - 1$, then it also holds for $i$. It suffices to show that $D(v_i)$ and $\pred(v_i)$ are correctly computed.

    The statement holds trivially for $v_i = s$. 
    If $v_i \neq s$ holds, assume that the length of the shortest path between $s$ and $v_i$ is $l$. Let $s = p_1, p_2, \ldots, p_{l + 1} = v_i$ be the shortest path from $s$ to $v_i$.
    Let $u = p_{l}$. The path $p_1, \ldots, p_{l}$ must also be a shortest path to $u$. By induction, we then have that $D(u) = l - 1$. Since Line~6 selects the vertex with an incoming edge with minimal $d(u)$, we then have that:
    \[
        D(v_i) = \min_{u \to v_i} D(u) + 1 \le  D(u) + 1 = l.
    \] 
    For the converse statement, let $u = \pred(v_i)$. We append the edge $u \to v$ to the shortest path from $s$ to $u$ to obtain another path of length $D(v_i) = D(u) + 1$. By definition, $D(v_i)$ is greater than the length of the shortest path from  $s$ to $v_i$.
    Therefore, $D(v_i)$ is computed correctly. Since there is an edge $\pred(v_i) \to v_i$ and $D(v_i) = D(\pred(v_i)) + 1$, we have that $\pred(v_i)$ is also computed correctly. This concludes the proof of correctness

    \Cref{alg:sssp} uses $\mathcal{O}(n \sqrt{n \log n}) + \mathcal{O}(n) + \mathcal{O}(n \log n) = \mathcal{O}(n \sqrt{n \log n})$ queries. The first two terms come from \Cref{thm:algtopsort} and \Cref{thm:reachability} respectively. The last term comes from the pass to compute the distance to each vertex. For each vertex, we use \Cref{lemma:discover-minimal-edge} to find the edge that gives the best distance.
\end{proof}

Note that \Cref{alg:sssp} can be adapted to compute different recurrences. For instance, to compute the longest path, we take the maximum out of all neighbors. To find the length of the shortest path towards a sink $s$, we can process vertices in reverse order of topological ordering. Generally, any recurrence that takes as an aggregate the minimum or the maximum element out of all in- or out-neighbors can be computed in $\mathcal{O}(\log n)$ for any vertex.

\section{Acknowledgment.} We would like to thank Danupon Nanongkai for posting the open problem on cut query complexity of $s$-$t$ reachability in general graphs \cite{nanongkai-open-problems}, and for informing us of other concurrent work.

\bibliographystyle{alphaurl}
\bibliography{references}

\end{document}